\newtheorem{theorem}{Theorem}
\newtheorem{lemma}[theorem]{Lemma}
\newtheorem{definition}[theorem]{Definition}
\newtheorem{observation}[theorem]{Observation}
\newcommand{\paragraphprev}{}
\newcommand{\paragraphdef}[1]{\paragraph{#1}}
\newcommand{\introductionAppendixExplanation}{
In the main part, we show a bound $O(1.321^{n})$ that still improves on the bound of~\cite{istt10}. In the appendix we prove the better bound. The only change is we use a better result of~\cite{istt10} which has different parameters; however these are not not stated explicitly, so we need to derive and prove them.
}
\newcommand{\theoremAppendixExplanation}{
The previous theorem is proved in the appendix.
We prove the following weaker theorem in the main section:
}
\newcommand{\valueGammaH}{\ensuremath{0.6073995502}}
\newcommand{\valueOneMinusGammaH}{\ensuremath{0.3926004498}}
\newcommand{\valueBetaH}{\ensuremath{0.8022563838}}
\newcommand{\valueOneHalfToTheBetaH}{\ensuremath{0.57345159}}
\newcommand{\forced}{\ensuremath{{\rm forced}}}
\newcommand{\guessed}{\ensuremath{{\rm guessed}}}
\newcommand{\vbl}{\ensuremath{{\rm vbl}}}
\newcommand{\E}{\ensuremath{\mathbf{E}}}
\newcommand{\poly}{\ensuremath{{\textup{\rm poly}}}}
\newcommand{\ignore}[1]{}
\newcommand{\sat}{\mathrm{sat}}
\newcommand{\PPZ}{\textsc{PPZ}}
\newcommand{\PPSZ}{\textsc{PPSZ}}
\newcommand{\ol}{\overline}
\newcommand{\Resolve}{\textsc{Resolve}}
\newcommand{\Schoening}{\textsc{Schoening}}
\newcommand{\Comb}{\textsc{Comb}}
\newcommand{\ISTT}{\textsc{ISTT}}
\newcommand{\ISTTSch}{\textsc{ISTTSch}}
\newcommand{\floor}[1]{\left\lfloor{}#1\right\rfloor}
\newcommand{\refHtheory}{Sections 4.2 and 4.3}
\newcommand{\refHevaluation}{Section 4.6}
\title{Improving PPSZ for $3$-SAT using Critical Variables}
\author{
Timon Hertli\\
\texttt{timon.hertli@inf.ethz.ch}\\
\and
Robin A. Moser\\
\texttt{robin.moser@inf.ethz.ch}\\
\and
Dominik Scheder\\
\texttt{dominik.scheder@inf.ethz.ch}\\
\vspace*{3mm}\\
Institute for Theoretical Computer Science\\
Department of Computer Science\\
ETH Z\"urich, 8092 Z\"urich, Switzerland
}
\begin{document}

\maketitle

\begin{abstract}
  A critical variable of a satisfiable CNF formula is a variable that has the
  same value in all satisfying assignments.  Using a simple case
  distinction on the fraction of critical variables of a CNF formula, we
  improve the running time for
  3-SAT from $\mathcal{O}\!\left(1.32216^{n}\right)$ by Rolf~\cite{rolf2006}
  to $\mathcal{O}\!\left(1.32153^{n}\right)$.
  Using a different approach, Iwama et al.~\cite{istt10} very recently achieved a running time of $\mathcal{O}\!\left(1.32113^{n}\right)$.
  Our method nicely combines with theirs, yielding the currently fastest known algorithm with running time
  $\mathcal{O}\!\left(1.32065^{n}\right)$.
  We also improve the
  bound for 4-SAT from
  $\mathcal{O}\!\left(1.47390^{n}\right)$~\cite{it04} to
  $\mathcal{O}\!\left(1.46928^{n}\right)$, where
  $\mathcal{O}\!\left(1.46981^{n}\right)$ can be obtained using the
  methods of~\cite{it04} and~\cite{rolf2006}.
\end{abstract}

\section{Introduction}

The ideas behind the most successful algorithms for $k$-SAT are
surprisingly simple.  In 1999, Paturi, Pudl\'ak, and Zane~\cite{ppz}
proposed the following algorithm.  Given a $k$-CNF formula $F$, we
choose a variable $x$ uniformly at random from the $n$ variables in
$F$, choose a truth value $b \in \{0,1\}$, and set $x$ to $b$, thereby
replacing $F$ by $F^{[x \mapsto b]}$, and continue with $F^{[x \mapsto
  b]}$.  The value $b$ is chosen as follows: If the formula contains
the unit clause $(x)$, we choose $b=1$. If it contains $(\bar{x})$, we
choose $b=0$.  In these two cases, we say $x$ was {\em forced}. If it
contains neither, we choose $b$ randomly and say $x$ was {\em
  guessed}.  Finally, if the formula contains both $(x)$ and
$(\bar{x})$, we can give up, since the formula is unsatisfiable. This
algorithm is usually called $\PPZ$ after its three inventors.

Intuitively, if $F$ is ``strongly constrained'', then the algorithm
encounters many unit clauses, hence it needs to guess significantly
fewer than $n$ variables. On the other hand, if  $F$ is only
``weakly constrained'', it has multiple satisfying assignments,
making it easier to find one.
Paturi, Pudl\'ak and Zane~\cite{ppz} make this intuition precise and
show that $\PPZ$ finds a satisfying assignment for a $k$-CNF formula with probability at
least $2^{-(1-1/k)n}$, provided there exists one.

A couple of years later, Paturi, Pudl\'ak, Saks, and Zane~\cite{ppsz}
came up with a simple but powerful idea. In a preprocessing step, they
apply a restricted version of resolution. This increases the number of
unit clauses the algorithm encounters and therefore increases its
success probability.
This gives an algorithm called $\PPSZ$.  If $F$ has a unique
satisfying assignment, its success probability is quite good (for
$3$-SAT, it is $\Omega(1.308^{-n})$), and the analysis is highly
elegant. The case of multiple satisfying assignments appears to be
much more difficult and has been the subject of several papers so far.
Iwama and Tamaki~\cite{it04} made a major step forward when they
observed that while the success probability of $\PPSZ$ deteriorates as
the number of satisfying assignments increases, that of Sch\"oning's
random walk algorithm~\cite{schoning1999} improves.  They quantified
this tradeoff and obtained an algorithm with a success probability of
$\Omega(1.32373^{-n})$\footnote{Using the new version of~\cite{ppsz} immediately gives the bound $\Omega(1.32267^{-n})$, as stated in~\cite{rolf2006}.}. We denote this combined algorithm, consisting of one run
of $\PPSZ$ and one run of Sch\"oning's random walk algorithm, by $\Comb$.\paragraphprev
%

\paragraphdef{The PPSZ paper.}
There are two versions of~\cite{ppsz}, which we call the old version
and the new version. For unique $k$-SAT, both are the same, but for general
$k$-SAT, the old version of~\cite{ppsz} gives a more complicated
analysis. The old version gives a better bound for 3-SAT and the new version gives a better bound for 4-SAT\@.

Only the new version is published, but the old version is still
available at the Citeseer cache%
\footnote{\htmladdnormallink{http://citeseerx.ist.psu.edu/viewdoc/summary?doi=10.1.1.41.1134}{http://citeseerx.ist.psu.edu/viewdoc/summary?doi=10.1.1.41.1134}}.
However, we have found some minor errors in that version.  There is also a
conference version~\cite{ppszconference} stating the results of the old
version of~\cite{ppsz}, but without most proofs. 
Rolf~\cite{rolf2006} improved the analysis of the old version to get a bound of $\Omega(1.32216^n)$. However~\cite{rolf2006} does not consider 4-SAT\@. We use the ideas of~\cite{rolf2006} for our improvement
of 4-SAT\@.
In Timon Hertli's
master thesis~\cite{thesis}, the old version of~\cite{ppsz} with the
result of~\cite{rolf2006} is presented in a self-contained way. We will reference that thesis for detailed proofs.

\subsection{Our Contribution}

Let $F$ be a satisfiable CNF formula over $n$ variables and $x$ be a variable therein.
We call $x$ {\em critical} if all satisfying assignments of $F$ agree
on $x$. Equivalently, $x$ is critical if exactly one of the formulas
$F^{[x \mapsto 1]}$ and $F^{[x \mapsto 0]}$ is satisfiable. We denote by $c(F)$ the fraction of
critical variables, i.e., the number of critical variables divided by
$n$; if $n=0$, we define $c(F):=1$.

Our contribution consists of two statements: Theorem~\ref{theorem-critical} shows that for our purposes we only need to consider formulas with many critical variables. Point \ref{lemma-item-3} of Lemma~\ref{lemma-forced-critical} then implies that the success
probability of $\PPSZ$ increases if $F$ has many critical variables. This is obtained by slightly modifying the existing analysis of~\cite{ppsz} and~\cite{rolf2006} by taking critical variables into account. However, Lemma~\ref{lemma-forced-critical} is somewhat technical and we need to embed it into a review of the existing analysis.
Theorem \ref{theorem-critical} is very simple, so we state it here:
\begin{theorem}
\label{theorem-critical}
Let $p,q,c^*\in [0,1]$ and $a,b\geq 1$ such that $\frac{q}{b}=\left(1-\frac{c^*}{2}\right)=:r$. Suppose algorithm $\mathcal{A}$ runs in time $a^n 2^{o(n)}$ and for every satisfiable $(\leq k)$-CNF formula $F$ with $c(F)\geq c^*$ finds a satisfying assignment with probability at least $p^n \left(\frac{1}{2}\right)^{o(n)}$. Then there exists an algrotihm $\mathcal{A}'$ that runs in time $\max\{a,b\}^n 2^{o(n)}$ and for every satisfiable $(\leq k)$-CNF formula finds a satisfying assignment with probability at least $\min\{p,q\}^{n}\left(\frac{1}{2}\right)^{o(n)}$.
\end{theorem}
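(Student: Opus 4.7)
The plan is to define $\mathcal{A}'$ as the sequential execution of $\mathcal{A}$ and a second algorithm $\mathcal{B}$ on $F$, returning any satisfying assignment found. The total running time is then $a^n + b^n \leq 2\max\{a,b\}^n$, which fits into $\max\{a,b\}^n 2^{o(n)}$. For the success probability, I would split on $c(F)$: when $c(F)\geq c^*$, the call to $\mathcal{A}$ already succeeds with probability at least $p^n(1/2)^{o(n)}$; when $c(F)<c^*$, the algorithm $\mathcal{B}$ is designed to succeed with probability at least $q^n(1/2)^{o(n)}$. In either case the overall success is at least $\min\{p,q\}^n(1/2)^{o(n)}$, as desired.

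The heart of the argument is the design of $\mathcal{B}$. I would take $\mathcal{B}$ to consist of $b^n$ independent repetitions of a polynomial-time basic trial; each trial picks a uniformly random order of the $n$ variables and processes them in that order, assigning each variable a uniformly random value in $\{0,1\}$ (performing unit propagation whenever a unit clause forces a variable). The key observation is that, conditioned on the partial assignment so far being extendable to some element of $\sat(F)$, the next randomly chosen value is itself consistent with probability $1-c_i/2$, where $c_i$ is the critical fraction of the residual formula: a non-critical variable admits both values and a critical one admits exactly one. Composing, the per-trial success probability equals $\E\!\left[\prod_{i=0}^{n-1}(1-c_i/2)\right]$. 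If one could ensure $c_i\leq c^*$ throughout, this would give per-trial success at least $r^n$, and $b^n$ independent repetitions would boost this to $(br)^n=q^n$ up to subexponential factors; the boost does not overshoot since the hypothesis implicitly forces $q\leq 1$, i.e.\ $b\leq 1/r$.

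The main obstacle is that $c_i$ can exceed $c^*$ at intermediate steps even though $c_0=c(F)<c^*$. For instance, a chain-like formula $(x_1 \Leftrightarrow x_2)\wedge\cdots\wedge(x_{n-1}\Leftrightarrow x_n)$ has $c_0=0$ but $c_1=1$ after any first assignment. I would resolve this by hybridizing the basic trial with $\mathcal{A}$: at the first step $i^*$ at which the residual critical fraction would cross $c^*$, invoke $\mathcal{A}$ on the residual formula. The trial then succeeds with probability at least the product of the probability $\geq r^{i^*}(1/2)^{o(n)}$ that the initial $i^*$ random choices stayed consistent and the probability $\geq p^{n-i^*}(1/2)^{o(n)}$ that $\mathcal{A}$ succeeds on the now high-critical residual; summing this contribution over $i^*$ and using $q=br$, $b\leq 1/r$ yields the required $\min\{p,q\}^n(1/2)^{o(n)}$ bound. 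An alternative route is a direct expected-value analysis of $\prod_i(1-c_i/2)$ over the random ordering, exploiting the concavity of $\log(1-c/2)$ to average the intermediate excursions of $c_i$ back to the target $r^n$.
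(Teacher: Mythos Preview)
Your overall strategy---randomly guess variables until the critical fraction is high, then hand off to $\mathcal{A}$---is exactly the idea behind the paper's proof. However, the hybrid trial as you describe it has two genuine gaps.

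First, your hybrid trial must \emph{detect} the first step $i^*$ at which the residual critical fraction reaches $c^*$. But computing $c(F')$ for the residual $F'$ is not efficient: deciding whether a single variable is critical already requires solving two SAT instances. So the algorithm as written cannot be executed within the stated time bound. Second, even granting an oracle for $c(\cdot)$, your running-time accounting breaks once you insert a call to $\mathcal{A}$ inside each of the $b^n$ trials: a trial that stops at $i^*$ costs $a^{\,n-i^*}2^{o(n)}$, so the total can be as large as $b^n a^{n}2^{o(n)}$, not $\max\{a,b\}^n 2^{o(n)}$. Your ``alternative route'' via concavity of $\log(1-c/2)$ does not rescue the pure-guessing trial either: your own chain example $(x_1\Leftrightarrow x_2)\wedge\cdots$ already shows $\prod_i(1-c_i/2)=2^{-(n-1)}$, which is exponentially smaller than $r^n$ for any $c^*<1$.

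The paper sidesteps both problems by making the switch point \emph{non-adaptive}: for every $j\in\{0,\dots,n\}$ it runs $b^{j}$ independent rounds of ``guess $j$ variables, then call $\mathcal{A}$ on the residual.'' The running time is then $\sum_j b^{j}a^{\,n-j}2^{o(n)}\le (n{+}1)\max\{a,b\}^n 2^{o(n)}$, and no knowledge of $c(F')$ is needed. The analysis shows, by a short averaging/telescoping argument, that there must exist some $j^*$ for which the probability that the residual is both satisfiable and has $c\ge c^*$ is at least $r^{j^*}/(n{+}1)$; combining with the guarantee of $\mathcal{A}$ and the $b^{j^*}$ repetitions (using $b r=q$) gives success probability $\ge p^{\,n-j^*}q^{\,j^*}(1/2)^{o(n)}\ge \min\{p,q\}^n(1/2)^{o(n)}$. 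Replacing your adaptive stopping rule with this enumeration over $j$ is the missing ingredient.
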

Obviously we can turn $\mathcal{A'}$ into a algorithm that finds a satisfying assignment in expected time $\left(\frac{\max\{a,b\}}{\min\{p,q\}}\right)^n 2^{o(n)}$.
\begin{proof}
By \emph{guessing} $j$ variables we mean fixing in $F$ $j$ variables chosen uniformly at random to values chosen uniformly at random, obtaining the formula $F'$ over at most $n-j$ variables. $\mathcal{A}'$ for each $j\in\{0,\dots,n\}$ repeats the following $b^j$ times: Guess $j$ variables and then run $\mathcal{A}$ on $F'$; the running time bound is trivial. To bound the probability, we first claim that there exists a $j$ such that $a_j\geq \frac{r^j}{n+1}$ where $a_j$ is the probability that after guessing $j$ variables $F'$ is satisfiable and $c(F')\geq c^*$. Suppose this is not the case: Let $b_j$ be the probability that after guessing $j$ variables $F'$ is satisfiable and $c(F')<c^*$. Clearly $a_0+b_0=1$ since $F$ is satisfiable, and $a_{i+1}+b_{i+1}\geq b_i\cdot r$, as guessing one variable preserves satisfiability with probability at least $\left(1-\frac{c^*}{2}\right)=r$. By the assumption, $b_i\cdot r\geq  \left(a_i+b_i-\frac{r^i}{n+1}\right)\cdot r$; from this it is easy to show that $a_n+b_n\geq  r^n-n\frac{r^n}{n+1}=\frac{r^n}{n+1}$. If $j=n$, we have $c(F')=1$ by definition; hence $b_n=0$ and $a_n\geq \frac{r^n}{n+1}$, a contradiction. Now let $j^*$ be the $j$ given by the claim; we repeat $b^{j^*}$ times an algorithm that has success probability at least $\frac{r^{j^*}}{n+1}p^{n-j^*}\left(\frac{1}{2}\right)^{o(n)}$; as $r\cdot b=q$ this gives by a routine argument an algorithm with success probability at least $p^{n-j^*}q^{j^*}\left(\frac{1}{2}\right)^{o(n)}$.
\end{proof}

We improve the analysis for $\PPSZ$ for formulas with many critical
variables. In combination with Theorem~\ref{theorem-critical}, this
gives a success probability of $\Omega\!\left(1.32153^{-n}\right)$
for $3$-SAT and $\Omega\!\left(1.46928^{-n}\right)$ for $4$-SAT\@.
Very recently, Iwama, Seto, Takai, and Tamaki~\cite{istt10} showed
how to combine an improved version of Sch\"oning's
algorithm~\cite{hofmeister02,bs03} with $\PPSZ$ and achieved expected running time of $O(1.32113^{n})$. We combine our improvement with
theirs to obtain a bound of $O(1.32065^{n})$.
\introductionAppendixExplanation

We analyze the algorithm $\Comb(F)$, where $F$ is a CNF formula.
$\Comb$ consists essentially of a call to $\PPSZ$~\cite{ppsz} and to
$\Schoening$~\cite{schoning1999}. In~\cite{it04} it was shown that
$\Comb$ has a better success probability than what the analysis of $\PPSZ$ and
$\Schoening$ gives. Let $\ISTT$ be the algorithm of~\cite{istt10} that improves $\Comb$.
\begin{theorem}
\label{p.t.3sat}
\sloppypar{
There exists an algorithm that for every satisfiable $3$-CNF formula finds a satisfying assignment
with probability
$\Omega\!\left(1.32153^{-n}\right)$
and runs in subexponential time.
}
\end{theorem}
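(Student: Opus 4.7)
The plan is to apply Theorem~\ref{theorem-critical} with $a=b=1$, reducing the task to designing a subexponential-time algorithm $\mathcal{A}$ that succeeds with probability $p^n \cdot 2^{-o(n)}$ on every satisfiable $3$-CNF formula $F$ with $c(F) \ge c^*$, for some threshold $c^*$ to be chosen. With $b=1$ we have $q = r = 1 - c^*/2$, so the resulting algorithm $\mathcal{A}'$ has overall success probability $\min\{p,q\}^n \cdot 2^{-o(n)}$ and still runs in subexponential time. It then suffices to exhibit a $c^*$ such that $\min\{p,q\} \ge 1/1.32153$.

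As the base algorithm $\mathcal{A}$ I would use $\Comb$, i.e., one run of $\PPSZ$ followed by one run of Schöning's random walk, reanalyzed for inputs with $c(F) \ge c^*$. The Iwama--Tamaki idea~\cite{it04} is that the two components are complementary: $\PPSZ$ is efficient when the set of satisfying assignments is small, while Schöning's walk is efficient when it is large. To strengthen the $\PPSZ$ side when $c(F)$ is large, I would invoke Lemma~\ref{lemma-forced-critical}(\ref{lemma-item-3}), which says that the $\PPSZ$ success probability grows with the fraction of critical variables: a critical variable is, after the resolution preprocessing, more likely to be forced along the random permutation rather than guessed. Plugging this improved $\PPSZ$ bound into the Iwama--Tamaki tradeoff yields a function $p(c^*)$ such that $\Comb$ succeeds with probability at least $p(c^*)^n \cdot 2^{-o(n)}$ on formulas with $c(F) \ge c^*$.

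The last step is a one-parameter optimization over $c^*$: increasing $c^*$ improves $p(c^*)$ but worsens $q = 1-c^*/2$, so the optimum is attained at the balance point $p(c^*) = 1 - c^*/2$, which by a routine numerical computation falls near $c^* \approx 0.487$ and yields $\min\{p,q\} \ge 1/1.32153$. The main obstacle is hidden inside Lemma~\ref{lemma-forced-critical}(\ref{lemma-item-3}): one must reopen the $\PPSZ$ analysis of~\cite{ppsz,rolf2006,thesis} and carefully re-derive the forcing probabilities while tracking which variables are critical, then check that the Iwama--Tamaki tradeoff still composes correctly with the parametrized $\PPSZ$ bound (in particular that the assignments contributing to Schöning's half of the analysis remain compatible with the critical-variable lower bound). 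Once that bookkeeping is in place, the remaining work is the numerical optimization of a simple one-variable expression.
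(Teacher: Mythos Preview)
Your proposal is correct and follows essentially the same route as the paper: reduce via Theorem~\ref{theorem-critical} to formulas with $c(F)\ge c^*$, analyze $\Comb$ there using the subcube partition and Lemma~\ref{lemma-forced-critical}(\ref{lemma-item-3}) to sharpen the $\PPSZ$ side, and balance against Sch\"oning via the Iwama--Tamaki tradeoff. The one thing you leave implicit that the paper makes explicit is that the ``Iwama--Tamaki tradeoff'' is itself a worst-case over a second parameter $\Delta=|D_\alpha|/n$ (the fraction of defining variables of the subcube), so the actual computation involves first finding the worst $\Delta$ for a given $c^*$ and only then balancing $p(c^*)$ against $1-c^*/2$; the compatibility concern you flag is handled by the observation that every critical variable is nondefining.
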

\begin{theorem}
\label{guess-istt}
\sloppypar{
There exists an algorithm that for every satisfiable $3$-CNF formula finds a satisfying assignment
with expected running time
$O\!\left(1.32065^{n}\right)$.
}
\end{theorem}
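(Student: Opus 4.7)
The plan is to combine the critical-variables refinement of $\PPSZ$ with the $\ISTT$ algorithm of~\cite{istt10}, and then remove the assumption of many critical variables via Theorem~\ref{theorem-critical}. The starting point is $\ISTT$, which already achieves expected running time $O(1.32113^n)$ by balancing a $\PPSZ$-like exponent (good when $F$ has few satisfying assignments) against an improved Schöning-like exponent (good when $F$ has many). I would first re-analyze the $\PPSZ$ component of $\ISTT$ under the additional hypothesis $c(F)\ge c^*$, invoking Lemma~\ref{lemma-forced-critical}: more critical variables strictly increase the number of variables that $\PPSZ$ forces, raising its success probability. Substituting the sharper $\PPSZ$ exponent back into the $\ISTT$ tradeoff yields a subexponential-time algorithm $\mathcal{A}$ that, on every satisfiable $3$-CNF $F$ with $c(F)\ge c^*$, succeeds with probability at least $p(c^*)^n (1/2)^{o(n)}$ for some $p(c^*) > 1/1.32113$.

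Next, I would apply Theorem~\ref{theorem-critical} with this $\mathcal{A}$, parameters $a=1$ and $p=p(c^*)$, and $b,q\ge 1$ satisfying $q/b = 1 - c^*/2$. The theorem then produces an algorithm for \emph{all} satisfiable $3$-CNF formulas with running time $\max\{1,b\}^n 2^{o(n)}$ and success probability at least $\min\{p,q\}^n 2^{-o(n)}$; by the remark immediately after Theorem~\ref{theorem-critical}, independent repetitions convert this into expected running time $\bigl(\max\{1,b\}/\min\{p,q\}\bigr)^n 2^{o(n)}$. There are two free parameters to tune numerically: $c^*$ selects where to split along the critical-fraction axis, and $b$ (equivalently $q$) balances the two failure modes of the overall algorithm. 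I expect the optimum to close at the claimed $1.32065$.

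The main obstacle is the first step. As the introductory remark concerning the appendix points out, the relevant tradeoff parameters inside $\ISTT$ are not stated explicitly in~\cite{istt10}, so one cannot feed the improved $\PPSZ$ exponent into a black box: the correct variant of the $\ISTT$ balancing lemma has to be re-derived, with an explicit dependence on the $\PPSZ$ exponent that Lemma~\ref{lemma-forced-critical} sharpens. This is essentially the same obstacle as in Theorem~\ref{p.t.3sat}, with $\ISTT$ in place of $\Comb$, but the less cleanly exposed $\ISTT$ analysis forces more unpacking. Once this rederivation is in hand, Theorem~\ref{theorem-critical} together with a standard Markov-style repetition argument completes the proof, modulo routine bookkeeping of polynomial and $2^{o(n)}$ factors.
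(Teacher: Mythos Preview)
Your plan matches the paper's proof: re-derive the $\ISTT$/$\ISTTSch$ tradeoff with explicit constants (the paper obtains $f_m = 64/63$, $f_d \approx 1.28248$ by unpacking the clause-type analysis of~\cite{istt10}), feed in the critical-variable refinement of the $\PPSZ$ side, balance against $\Delta$, and remove the hypothesis $c(F)\ge c^*$ via Theorem~\ref{theorem-critical}. Two small inaccuracies to fix when you write it out: first, $\ISTT$ carries an exponential preprocessing cost $6^{m^* n}$, so $\mathcal{A}$ is not subexponential and your choice $a=1$ does not literally work---the paper sets $m^*=\log_6(1.32065)$ so the preprocessing fits inside the target bound; second, Lemma~\ref{lemma-forced-critical} is stated for fixed numerical $\beta,\gamma$ tuned to the $1.32153$ target, and the paper re-optimizes the $H$-parameter $\theta$ (obtaining new $\beta_H,\gamma_H$) for the $1.32065$ target, so you will need the analogous lemma rather than Lemma~\ref{lemma-forced-critical} verbatim.
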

\theoremAppendixExplanation
\begin{theorem}
\label{guess-istt-weak}
\sloppypar{
There exists an algorithm that for every satisfiable $3$-CNF formula finds a satisfying assignment
with expected running time
$O\!\left(1.321^{n}\right)$.
}\end{theorem}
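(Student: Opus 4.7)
The plan is to invoke Theorem~\ref{theorem-critical} with a base algorithm $\mathcal{A}$ obtained from a refined analysis of $\ISTT$ on inputs whose critical-variable fraction is at least some threshold $c^*$. First, I would recall that $\ISTT$ of~\cite{istt10} combines a single $\PPSZ$ trial with a single run of the improved $\Schoening$-type random walk of~\cite{hofmeister02,bs03}, and on arbitrary satisfiable $3$-CNF formulas already achieves success probability $\Omega(1.32113^{-n})$ per trial in polynomial time, so in the language of Theorem~\ref{theorem-critical} one can take $a = 1$.

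Next, I would open the analysis of $\ISTT$ and replace the $\PPSZ$ success-probability estimate by the sharper bound from point~\ref{lemma-item-3} of Lemma~\ref{lemma-forced-critical}: on formulas with $c(F) \geq c^*$, the $\PPSZ$ success probability improves from the standard bound to $\Omega(p_1(c^*)^n)$ for an explicit function $p_1$ strictly increasing in $c^*$. Because the~\cite{istt10} analysis of $\ISTT$ combines the $\PPSZ$ and $\Schoening$ success probabilities on a per-satisfying-assignment basis, the improvement propagates through directly, giving a total success probability $p(c^*)^n \cdot 2^{-o(n)}$, with $p(c^*) > 1/1.32113$ whenever $c^* > 0$.

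I would then apply Theorem~\ref{theorem-critical} with $a = 1$, $p = p(c^*)$, and $b$ chosen so that $q := b \cdot (1 - c^*/2)$ coincides with $p$, so that $\max(a,b) = b$ and $\min(p,q) = p$. The resulting expected running time is $\bigl(1 / (p(c^*) \cdot (1 - c^*/2))\bigr)^n \cdot 2^{o(n)}$, and I would numerically optimize $c^*$ to push this bound comfortably below $1.321^n$.

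The main obstacle is verifying that the critical-variable refinement of Lemma~\ref{lemma-forced-critical} passes unchanged through the $\ISTT$ combination: the relevant quantity in~\cite{istt10} is the $\PPSZ$ success probability on each individual satisfying assignment as a function of the neighbouring satisfying assignments, and Lemma~\ref{lemma-forced-critical} operates at exactly this level, so the substitution should be clean. The subsequent numerical optimization is routine; the slack between the $1.321$ targeted here and the tighter $1.32065$ of Theorem~\ref{guess-istt} is precisely the slack between the published parameters of~\cite{istt10} used here and the sharper parameters derived in the appendix.
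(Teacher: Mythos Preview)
Your overall structure---invoke Theorem~\ref{theorem-critical} with a base algorithm that combines the critical-variable-sharpened $\PPSZ$ bound from Lemma~\ref{lemma-forced-critical} with the $\ISTT$ improvement of Sch\"oning---is the same as the paper's. The gap is in how you handle the running time of $\ISTT$. You assert that a single $\ISTT$ trial is polynomial, hence $a=1$. This is not so: the $\ISTTSch$ component requires a preprocessing step from~\cite{bs03} costing $O(6^{m^* n})$ for a tunable parameter $m^*\in[0,\tfrac13]$, and only \emph{after} this preprocessing does a single trial enjoy the enhanced success probability $\ge 1.012795^{m^* n}\cdot 1.2845745^{\Delta n}\cdot(3/4)^n$. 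Without preprocessing you are back to plain $\Schoening$, and the bound collapses to that of Theorem~\ref{p.t.3sat}. The paper therefore fixes $m^*$ so that $6^{m^*}\approx 1.321$ matches the target base and accounts for the preprocessing cost explicitly; this couples $m^*$---and hence the $\ISTTSch$ success probability---to the final running-time target, so you cannot set $a=1$ and optimize $c^*$ independently.

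Two smaller points. First, the paper does not treat the combination as a black box but redoes the $\Delta$-tradeoff: the $\PPSZ$ bound (with the critical-variable term) decays in $\Delta$, the $\ISTTSch$ bound improves in $\Delta$, and the crossover at $\Delta\approx 0.0292$ must be located for the specific $c^*,m^*$ chosen; ``propagates through directly'' hides this calculation. Second, your expected-time formula does not follow from Theorem~\ref{theorem-critical} with your own parameter choices: with $a=1$, $q=p$, $b=p/(1-c^*/2)$ one gets $\bigl(\max\{a,b\}/\min\{p,q\}\bigr)^n=(b/p)^n=\bigl(1/(1-c^*/2)\bigr)^n$, not $\bigl(1/(p(c^*)(1-c^*/2))\bigr)^n$.
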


\begin{theorem}
\label{p.t.4sat}
\sloppypar{
There exists an algorithm that for every satisfiable $3$-CNF formula finds a satisfying assignment
with probability
$\Omega\!\left(1.46928^{-n}\right)$
and runs in subexponential time.
}
\end{theorem}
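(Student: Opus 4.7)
The plan is to mirror the proof of Theorem~\ref{p.t.3sat} with $k=4$ instead of $k=3$, but using the new-version analysis of~\cite{ppsz} as in~\cite{it04} (since that gives the better bound for $4$-SAT), and combining with Sch\"oning's random walk through the tradeoff introduced in~\cite{it04}. The argument has three ingredients: an improved success probability for $\PPSZ$ on $4$-CNF formulas whose critical fraction is bounded below, the Iwama--Tamaki combined algorithm $\Comb$, and Theorem~\ref{theorem-critical} to reduce the general case to the many-critical-variables case. Note that the theorem statement refers to $3$-CNF but the bound $\Omega(1.46928^{-n})$ corresponds to $4$-SAT as in the abstract, so I treat $F$ as a $4$-CNF formula.

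First I would fix a parameter $c^* \in [0,1]$ and analyze $\PPSZ$ on $(\leq 4)$-CNF formulas $F$ with $c(F) \geq c^*$. Point~\ref{lemma-item-3} of Lemma~\ref{lemma-forced-critical}, which quantifies how many additional variables become forced rather than guessed when there are many critical variables, will be applied in the setting of the tree-based analysis of the new version of~\cite{ppsz} used by~\cite{it04}. This yields an improved $\PPSZ$ success probability $p_{\PPSZ}(c^*, \beta)$, where $\beta$ measures the density of satisfying assignments (the same parameter as in the \cite{it04} analysis).

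Second, I would follow~\cite{it04} and combine one run of $\PPSZ$ with one run of $\Schoening$. Sch\"oning's success probability on formulas with many satisfying assignments grows in $\beta$ while the $\PPSZ$ bound shrinks in $\beta$; taking the worst case over $\beta$ with our improved $\PPSZ$ bound yields a function $p_{\Comb}(c^*)$ that strictly improves on the $\Omega(1.46981^{-n})$ bound obtainable from \cite{it04} and \cite{rolf2006} alone. Finally, apply Theorem~\ref{theorem-critical} with $k=4$, $a$ a polynomial factor, $p = p_{\Comb}(c^*)$, and $b$ chosen so that $q = b\,(1 - c^*/2) = p$. Numerically optimizing over $c^*$ should produce $p_{\Comb}(c^*) \geq 1.46928^{-1}$ and a subexponential running time.

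The main obstacle will be the numerical optimization: one has to jointly vary $c^*$ and $\beta$ inside the combined $\PPSZ$+Sch\"oning expression, identify the two regimes in which $\PPSZ$ respectively Sch\"oning dominates, and verify the modest improvement from $1.46981$ to $1.46928$. A secondary technical obstacle is adapting Lemma~\ref{lemma-forced-critical}, which is naturally phrased in the old-version analysis of~\cite{ppsz} that is tight for $3$-SAT, to the new-version tree analysis needed for $4$-SAT: one must verify that the per-variable gain from each critical variable is preserved when passing to the new-version forced/guessed accounting, with no loss in the integration over the critical-variable ordering.
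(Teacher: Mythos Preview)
Your plan is broadly correct and should reach the stated bound, but it differs from the paper's proof in one substantive way: the paper does \emph{not} use $\Schoening$ for $4$-SAT at all. It takes the piecewise-linear distribution $H(r)=\min\{r/\theta,1\}$ from the new version of~\cite{ppsz} and chooses $\theta$ so that $\beta_H = 1-\gamma_H$. With that balance the $\Delta$-dependence in the $\PPSZ$ bound cancels exactly (the exponent carries $-(1-\gamma_H)\Delta n$ from having fewer nondefining variables to guess and $+\beta_H\Delta n$ from the cost of restricting to $\Gamma$), so $\PPSZ$ alone already gives a bound uniform in $|D_\alpha|$ and there is no Iwama--Tamaki tradeoff to perform. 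Numerically this yields $\theta\approx 0.6804$ and $c^*\approx 0.6388$, and Theorem~\ref{theorem-critical} finishes. Your route---carrying $\Schoening$ along and optimizing the combined bound over $\Delta$---would rediscover this optimum (since at $\beta_H=1-\gamma_H$ the $\PPSZ$ curve is flat in $\Delta$ and $\Schoening$ cannot improve it), but at the cost of a two-dimensional optimization that the paper sidesteps. One terminological caution: the quantity you call $\beta$ (``density of satisfying assignments'') corresponds to the paper's $\Delta=|D_\alpha|/n$, the fraction of defining variables in the subcube partition; using $\Delta$ avoids collision with the random initial assignment $\beta$ and the cost parameter $\beta_H$.
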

\sloppypar{
This is already very close to unique $4$-SAT, which has a success probability of
$\Omega\!\left(1.46899^{-n}\right)$. }
The benefit of Theorem~\ref{theorem-critical} is that when proving
Theorems~\ref{p.t.3sat} and~\ref{p.t.4sat}, we only need to consider
formulas with many critical variables.  For example, to prove
Theorem~\ref{p.t.3sat}, we choose $c^*$ such that $1-c^*/2 =
1/1.32153$, i.e., $c^* \approx 0.4866$. Then we have to bound from
below the success probability of $\Comb$ for $3$-CNF formulas $F$ with
$c(F) \geq c^*$.
\subsection{Notation}
We use the notational framework introduced in \cite{welzl05}. We
assume an infinite supply of propositional \emph{variables}. A
\emph{literal} $u$ is a variable $x$ or a complemented variable $\bar
x$. A finite set $C$ of literals over pairwise distinct variables is
called a \emph{clause} and a finite set of clauses is a formula
in \emph{CNF} (Conjunctive Normal Form). We say that a variable $x$
\emph{occurs} in a clause $C$ if either $x$ or $\bar x$ are contained
in it and that $x$ occurs in the formula $F$ if there is any clause
where it occurs. 
We write $\mbox{vbl}(C)$ or $\mbox{vbl}(F)$ to denote
the set of variables that occur in $C$ or in $F$, respectively. 
A clause containing exactly one literal is called a \emph{unit clause}. We
say that $F$ is a 
$(\le k)$\emph{-CNF} formula
if every clause has
size at most $k$.  Let such an $F$ be given and write
$V:=\mbox{vbl}(F)$ and $n:=|V|$.

A \emph{assignment} is a function $\alpha : V \rightarrow
\{0,1\}$ which assigns a Boolean value to each variable. A literal
$u=x$ (or $u=\bar x$) is \emph{satisfied by} $\alpha$ if $\alpha(x)=1$
(or $\alpha(x)=0$). A clause is \emph{satisfied by} $\alpha$ if it
contains a satisfied literal and a formula is \emph{satisfied by}
$\alpha$ if all of its clauses are. A formula is \emph{satisfiable} if
there exists a satisfying truth assignment to its variables.

For an assignment $\alpha$ on $V$ and a set $W\subseteq V$, we denote
by $\alpha\oplus W$ the assignment that corresponds to $\alpha$ on
variables of $V\setminus W$ and is flipped on variables of $W$.

Given a CNF formula $F$, we denote by $\sat(F)$ the set of assignments that
satisfy $F$.

Formulas can be manipulated by permanently assigning values to
variables. If $F$ is a given CNF formula and $x \in \mbox{vbl}(F)$
then assigning $x \mapsto 1$ satisfies all clauses containing $x$
(irrespective of what values the other variables in those closes are
possibly assigned later) whilst it truncates all clauses containing
$\bar x$ to their remaining literals. 

We will write $F^{[x\mapsto1]}$ (and analogously $F^{[x\mapsto 0]}$) to
denote the formula arising from doing just this.

We say that two clauses $C_1$ and $C_2$ conflict on a variable $x$ if
one of them contains $x$ and the other $\ol{x}$. We call $C_1$ and
$C_2$ a resolvable pair if they conflict in exactly one variable $x$,
and we define their \emph{resolvent} by $R(C_1,C_2):=(C_1 \cup C_2)
\setminus \{x,\ol{x}\}$. It is easy to see that if $F$ contains a
resolvable pair $C_1$, $C_2$, then $\sat(F)=\sat(F\cup \{R(C_1,C_2)\})$. A
resolvable pair $C_1$, $C_2$ is $s$-bounded if $|C_1|\leq s$,
$|C_2|\leq s$, and $|R(C_1,C_2)|\leq s$.

By $\Resolve(F,s)$, we denote the set of clauses $C$ that have an
$s$-bounded resolution deduction from $F$. By a straightforward
algorithm, we can compute $\Resolve(F,s)$ in time
$O\left(n^{3s}\poly\left(n\right)\right)$~\cite{ppsz}.

By choosing an element u.a.r.\ from a finite set, we mean choosing it
uniformly at random. By choosing an element u.a.r. from an closed real
interval, we mean choosing it according to the continuous uniform
distribution over this interval. Unless otherwise stated, all random
choices are mutually independent.

We denote by $\log$ the logarithm to the base 2. For the logarithm to
the base $e$, we write $\ln$. We define $0\log 0:= 0$.

\section{Proof of the Main Theorems}

\begin{algorithm}[t]
\caption{$\PPSZ($CNF formula $F$, assignment $\beta$, permutation $\pi)$}
\begin{algorithmic}
\STATE Let $\alpha$ be a 
partial assignment over $\vbl(F)$, initially the empty assignment.
\STATE $G\leftarrow \Resolve(F,\log(|\vbl(F)|))$
\FOR {all $x \in \vbl(G)$, according to $\pi$}
\IF {$\{x\} \in G$}
\STATE $\alpha(x)\leftarrow 1$
\ELSIF {$\{\overline{x}\}\in G$}
\STATE $\alpha(x)\leftarrow 0$
\ELSE 
\STATE $\alpha(x)\leftarrow\beta(x)$
\ENDIF
\STATE $G\leftarrow G^{[x\mapsto \alpha(x)]}$
\ENDFOR
\RETURN $\alpha$
\end{algorithmic}
\end{algorithm}
\begin{algorithm}[t]
\caption{$\PPSZ($CNF formula $F)$}
\begin{algorithmic}
  \STATE \COMMENT{this algorithm is used for 4-SAT} \STATE Choose
  $\beta$ u.a.r.\ from all assignments on $\vbl(F)$ \STATE Choose
  $\pi$ u.a.r.\ from all permutations of $\vbl(F)$ \RETURN
  $\PPSZ(F,\beta, \pi)$
\end{algorithmic}
\end{algorithm}
\begin{algorithm}[t]
\caption{$\Schoening($CNF formula $F,$ assignment $\beta)$}
\begin{algorithmic}
\FOR {$3 |\vbl(F)|$ steps}
\IF {$\beta$ satisfies $F$}
\RETURN $\beta$
\ENDIF
\STATE Select an arbitrary $C\in F$ not satisfied by $\beta$
\STATE Select a variable $x$ u.a.r.\ from $\vbl(C)$ and flip $x$ in $\beta$
\ENDFOR
\RETURN $\beta$
\end{algorithmic}
\end{algorithm}
\begin{algorithm}[t]
\caption{$\Comb($CNF formula $F)$}
\begin{algorithmic}
\STATE
\COMMENT{this algorithm is used for 3-SAT}
\STATE Choose $\beta$ u.a.r.\ from all assignments on $\vbl(F)$
\STATE $\alpha\leftarrow \PPSZ(F,\beta)$
\IF {$\alpha\not\in\sat(F)$}
\STATE $\alpha\leftarrow \Schoening(F,\beta)$
\ENDIF
\RETURN $\alpha$
\end{algorithmic}
\end{algorithm}

In the following let $k\geq 3$ be a fixed integer. Let $F$ be a
satisfiable $(\leq k)$-CNF formula, $V:=\vbl(F)$ and $n:=|V|$. 
We first give the concepts from~\cite{ppsz} needed to understand Lemma~\ref{lemma-forced-critical}. Then we state the lemma and
use it to improve
the bounds on the success probability of $\Comb$ and $\ISTT$ given sufficiently many critical variables. In Section~\ref{section-prooflemma8}, we prove Lemma~\ref{lemma-forced-critical} and also consider 4-SAT. Most concepts used in the proof are from~\cite{ppsz,rolf2006}. Our contribution is to exploit what these concepts yield for critical variables.\paragraphprev

\paragraphdef{Subcubes.}
 For $D\subseteq V$ and $\alpha\in\{0,1\}^V$, the
set $ B(D,\alpha) := \{\beta \in \{0,1\}^V \ | \ \alpha(x)=\beta(x)\
\forall x \in D \} $ is called a \emph{subcube}. The variables in $D$
are called \emph{defining} variables and those in $V\setminus D$
\emph{nondefining} variables. The subcube $B(D,\beta)$ has dimension
$|V\setminus D|$.  For example, if $V = \{x_1,x_2,x_3\}$, $D =
\{x_1,x_3\}$ and $\alpha = (1,0,0)$, then $B(D,\alpha)$ contains
exactly the two assignments $(1,0,0)$ and $(1,1,0)$. Given a nonempty
set $S \subseteq \{0,1\}^V$, there is a partition
$$
\{0,1\}^V = \bigcup_{\alpha \in S} B_{\alpha}
$$
where the $B_\alpha$ are pairwise disjoint subcubes, and $\alpha \in
B_\alpha$ for all $\alpha \in S$. See~\cite{ppsz} for a proof.  For
the rest of the paper, we fix such a partition for $S$ being the set
of satisfying assignments. To estimate the success probability of
$\Comb$, consider the assignment $\beta$ that $\Comb$ chooses
uniformly at random from $\{0,1\}^V$.
\begin{eqnarray*}
\Pr[\Comb(F) \in \sat(F)] & 
= & \sum_{\alpha \in \sat(F)}\Pr[\Comb(F) \in \sat(F) | \ \beta \in B_\alpha]
\cdot \Pr[\beta \in B_\alpha] \\
& \geq & 
\min_{\alpha \in \sat(F)}\Pr[\Comb(F) \in \sat(F) \ | \ \beta \in B_\alpha].
\end{eqnarray*}
Hence instead of analyzing $\Comb$ for an
assignment $\beta$ sampled uniformly at random from all assignments, we fix $\alpha\in\sat(F)$ arbitrarily and
we think of $\beta$ as being sampled from the subcube $B_\alpha$.  Let
$N_\alpha$ be the set of non-defining variables of this cube, and
$D_\alpha$ the set of defining variables.  Intuitively, if
$B_\alpha$ has small dimension, then $\beta$ is likely to be close to
$\alpha$, thus $\Schoening$ has a better success probability:
\begin{lemma}[\cite{it04}]
  $\Pr[\Schoening(F,\beta) \in \sat(F) \ | \ \beta \in B_\alpha]
  \geq (2-2/k)^{-|N_\alpha|}$.
  \label{lemma-schoening-IT}
\end{lemma}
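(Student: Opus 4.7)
First, I would condition on $\beta \in B_\alpha$ and use the structure of the subcube: the bits $\beta(x)$ are forced to $\alpha(x)$ for $x\in D_\alpha$, while for $x\in N_\alpha$ they are uniform and mutually independent. Hence the Hamming distance $H:=|\{x:\beta(x)\neq\alpha(x)\}|$ is distributed as $\mathrm{Bin}(|N_\alpha|,1/2)$, and in particular $H\le |N_\alpha|$. The plan then is to fix $\alpha$ itself as the target and bound from below the probability that the random walk actually reaches $\alpha$; since $\Schoening$ returns as soon as the current assignment satisfies $F$, this lower bounds the event $\Schoening(F,\beta)\in\sat(F)$.

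For the per-distance bound I would invoke Sch\"oning's original analysis. Whenever the current assignment $\gamma$ differs from $\alpha$, the unsatisfied clause $C\in F$ chosen by the algorithm must contain at least one variable $x$ with $\gamma(x)\neq\alpha(x)$, because $\alpha$ satisfies $C$. As $|C|\le k$ and the flip variable is drawn uniformly from $\vbl(C)$, the walk moves one step closer to $\alpha$ with probability at least $1/k$ and away with probability at most $(k-1)/k$. Coupling with a biased nearest-neighbour walk on $\{0,1,\ldots,n\}$ and keeping only paths that reach $0$ in exactly $3d$ of the available $3n$ steps gives, starting from Hamming distance $d$,
\[
\Pr[\text{walk hits }\alpha]\;\geq\;\binom{3d}{d}\left(\tfrac{1}{k}\right)^{2d}\left(\tfrac{k-1}{k}\right)^{d}\;\geq\;\frac{1}{\poly(n)}\,(k-1)^{-d},
\]
the right-hand inequality by Stirling's approximation to $\binom{3d}{d}$.

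Finally I would average over $\beta\in B_\alpha$, using the binomial identity $\sum_{d}\binom{m}{d}a^d=(1+a)^m$ with $a=1/(k-1)$:
\[
\Pr[\Schoening(F,\beta)\in\sat(F)\mid\beta\in B_\alpha]
\;\geq\;\sum_{d=0}^{|N_\alpha|}\binom{|N_\alpha|}{d}\left(\tfrac{1}{2}\right)^{|N_\alpha|}\frac{(k-1)^{-d}}{\poly(n)}
\;=\;\frac{1}{\poly(n)}\left(\frac{k}{2(k-1)}\right)^{|N_\alpha|}.
\]
Since $\frac{k}{2(k-1)}=(2-2/k)^{-1}$, this yields the claimed $(2-2/k)^{-|N_\alpha|}$ up to a $\poly(n)$ prefactor. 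The only delicate point is this polynomial slack: the clean statement in the lemma omits it, so I would note at the outset that, as elsewhere in the paper, estimates are to be read modulo subexponential factors that are absorbed by the $(1/2)^{o(n)}$ and $2^{o(n)}$ terms appearing in Theorem~\ref{theorem-critical} and in the running-time bounds.
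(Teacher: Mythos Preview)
The paper does not give its own proof of this lemma; it is quoted from~\cite{it04}. Your outline is exactly the argument one finds there: restrict to the subcube $B_\alpha$ so that the initial Hamming distance to $\alpha$ is $\mathrm{Bin}(|N_\alpha|,\tfrac12)$, bound the per-distance success of the walk by Sch\"oning's coupling, and average via the binomial identity. Your closing remark about the suppressed $\poly(n)$ prefactor is also correct and matches how such statements are used throughout the paper.

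There is one concrete gap, though it is harmless for the paper's applications. The Stirling step
\[
\binom{3d}{d}\Bigl(\tfrac{1}{k}\Bigr)^{2d}\Bigl(\tfrac{k-1}{k}\Bigr)^{d}\;\geq\;\frac{1}{\poly(n)}\,(k-1)^{-d}
\]
is tight only at $k=3$: in general $\binom{3d}{d}(k-1)^d/k^{3d}=\Theta(d^{-1/2})\bigl(27(k-1)/(4k^{3})\bigr)^{d}$, and $27(k-1)/(4k^{3})<1/(k-1)$ for every $k\ge 4$ (for instance $81/256<1/3$ at $k=4$), so your displayed inequality fails by an exponential factor once $d=\Theta(n)$. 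Since the lemma is stated for arbitrary $k$, for $k\ge 4$ you would need to replace the fixed ``$3d$ steps'' estimate by the gambler's-ruin identity $\Pr[\text{walk ever hits }0]=(p/q)^{d}=(k-1)^{-d}$ together with a short argument that all but a $1/\poly(n)$ share of this probability is realised within the available $3n$ steps (the walk conditioned on absorption has drift $(k-2)/k\ge 1/3$ toward $0$). That said, the paper only invokes Lemma~\ref{lemma-schoening-IT} at $k=3$ (the $4$-SAT proof bypasses $\Schoening$ entirely), so your argument is sufficient for every use made of the lemma here.
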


\paragraphdef{Placements.}
As a next step, we analyze
$\PPSZ(F,\beta,\pi)$ with $\beta$ chosen uniformly at random from
$B_\alpha$ and the permutation also chosen from some subset of
permutations. A {\em placement} of the variables $V$ is a function
$\sigma: V \rightarrow [0,1]$, and a {\em uniform random placement} is defined
by chosing $\sigma(x)$ uniformly at random from $[0,1]$ independently
for each $x \in V$.  With probability $1$, a uniform random placement is
injective and gives rise to a uniformly distributed permutation via
the natural ordering $<$ on $[0,1]$. For the rest of the paper, we
will view $\pi$ as a placement rather than a permutation. Let $\Gamma$
be a measurable set of placements.  Then
\begin{multline*}
  \Pr[\PPSZ(F,\beta,\pi)\in\sat(F) \ | \ \beta \in B_\alpha] \geq  \\
  \Pr[\PPSZ(F,\beta,\pi)\in\sat(F) \ | \ \beta \in B_\alpha, \pi \in \Gamma] 
  \cdot \Pr[\pi \in \Gamma].
\end{multline*}
The benefit of this is that we can tailor $\Gamma$ towards our needs,
i.e., making the conditional probability $\Pr[\PPSZ(F,\beta,\pi) \in\sat(F) \ | \
\beta \in B_\alpha, \pi \in \Gamma]$ fairly large. This may come at
the cost of making $\Pr[\pi \in
\Gamma]$ small.\\

\paragraphdef{Forced variables.}
Suppose the permutation $\pi$ orders the
variables $V$ as $(x_1,\dots,x_n)$. Let $\alpha$ be a satisfying
assignment of $F$. Imagine we call $\PPSZ(F,\alpha,\pi)$. The
algorithm applies bounded resolution to $F$, obtaining $G =
\Resolve(F,\log(n))$ and sets the variables $x_1,\dots,x_n$
step by step to their respective values under $\alpha$, creating a
sequence of formulas by $G=G_0,G_1,\dots,G_n$, where $G_{i} =
G_{i-1}^{[x_i \mapsto \alpha(x_i)]}$ for $1 \leq i \leq n$. Since
$\alpha$ is a satisfying assignment, $G_n$ is the empty formula. We
say $x_i$ is {\em forced} with respect to $\alpha$ and $\pi$ if
$G_{i-1}$ contains the unit clause $\{x_i\}$ or $\{\bar{x}_i\}$. By
$\forced(\alpha,\pi)$ we denote the set of variables $x$ that are
forced with respect to $\alpha$ and $\pi$. If $x$ is not forced, we
say it is {\em guessed}. We denote by $\guessed(\alpha,\pi)$ the set
of guessed variables. Note that $\PPSZ(F,\beta,\pi)$ returns $\alpha$
if and only if $\alpha(x) = \beta(x)$ for all $x \in
\guessed(\alpha,\pi)$.  Furthermore, since $\beta$ is chosen uniformly
at random from $B_\alpha$, we already have $\alpha(x)=\beta(x)$ for
all $x \in D_\alpha$. Therefore
\begin{eqnarray}
	\Pr[\PPSZ(F,\beta,\pi)\in\sat(F)] & \geq & 
  \Pr[\PPSZ(F,\beta,\pi)=\alpha] \\ & = & 
  \E\left[2^{-|N_\alpha \cap \guessed(\alpha,\pi)|}\right] \nonumber \\
  & \geq &
  2^{-\E[|N_\alpha \cap \guessed(\alpha,\pi)|]},\label{ineq-success}
\end{eqnarray}
where the inequality comes from Jensen's inequality applied to the
convex function $t \mapsto 2^{-t}$. Note that (\ref{ineq-success})
holds when taking $\pi$ uniformly at random as well as when sampling
it from some set $\Gamma$. Using linearity of expectation, we see that
\begin{eqnarray}
\E[|N_\alpha \cap \guessed(\alpha,\pi)|] = 
\sum_{x \in N_\alpha}\Pr[x \in \guessed(\alpha,\pi)].
\label{sum-of-unforced}
\end{eqnarray}
Now if $\alpha$ is the unique satisfying assignment, then $N_\alpha =
V$. For $3$-SAT, one central result of~\cite{ppsz} is that
\begin{lemma}[\cite{ppsz}]
  Let $F$ be a satisfiable $3$-CNF formula with a unique satisfying
  assignment $\alpha$. Then for every $x \in \vbl(F)$, it holds that
  $\Pr[x \in \guessed(\alpha,\pi)] \leq 2\ln(2)-1+o(1) < 0.3863$.
\label{lemma-unique-forced}
\end{lemma}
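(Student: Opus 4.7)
The plan is to associate with each variable $x$ a combinatorial object---the \emph{critical clause tree} $T_x$---that records how $x$ becomes forced during a $\PPSZ$ execution, and then to bound the failure probability over the random placement $\pi$.

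First, I would construct $T_x$ by recursion, exploiting the uniqueness of $\alpha$. The root of $T_x$ is labeled $x$; since $\alpha$ is the only satisfying assignment, flipping $x$ in $\alpha$ must falsify some clause, giving a \emph{critical clause} $C_x \in F$ that contains the literal $x^{\alpha(x)}$ and whose other literals are all falsified by $\alpha$. For each other variable $y \in \vbl(C_x)$ I add a child labeled $y$ and attach to it a critical clause for $y$ with respect to the modified assignment $\alpha^{[x \mapsto \bar{\alpha}(x)]}$. Iterating this construction and truncating at a depth small enough that the clauses derivable from it remain inside $\Resolve(F, \log n)$, I obtain a rooted tree of branching at most $k-1 = 2$.

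Next, I would prove the key combinatorial criterion: $x$ is forced by $\PPSZ(F,\beta,\pi)$ whenever $T_x$ contains a finite rooted subtree $T'$ such that $\pi(y) < \pi(x)$ for every leaf $y$ of $T'$. The reason is that resolving the critical clauses attached to the internal nodes of $T'$ in a bottom-up fashion produces a clause whose only literal satisfied by $\alpha$ is $x^{\alpha(x)}$ and whose remaining literals are falsified by $\alpha$ and belong to variables that are leaves of $T'$; by the time $\PPSZ$ reaches $x$, those leaf variables have already been set to their $\alpha$-values, so the clause has collapsed to $\{x^{\alpha(x)}\}$ and thereby forces $x$.

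Finally, I would compute $\Pr[x \in \guessed(\alpha,\pi)]$ by analyzing the above event on an idealized infinite binary tree. Let $s = \pi(x)$ and let $f(s)$ denote the probability that the subtree rooted at a single child admits some finite sub-subtree whose leaves all have placement below $s$. A one-step recursion yields $f(s) = s + (1-s)\, f(s)^2$, whose relevant root is $f(s) = \min(s/(1-s),\,1)$. The probability of being guessed given $\pi(x)=s$ is therefore at most $1 - f(s)^2$, and a direct integration gives $\int_0^1 \bigl(1 - f(s)^2\bigr)\, ds = 2\ln 2 - 1$. The main obstacle is the careful finitary accounting: $T_x$ must be truncated so that all resolution intermediates have width $\leq \log n$, making the effective tree finite and keeping the forcing probability bounded below the infinite-tree fixed point only up to an additive $o(1)$; relating the two by a standard monotonicity-plus-convergence argument is what the ``$+o(1)$'' in the lemma statement absorbs.
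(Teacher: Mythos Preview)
Your proposal is correct and follows essentially the same route as the paper (which in turn cites~\cite{ppsz}): build a critical clause tree $T_x$ that, by uniqueness of $\alpha$, can be made full to depth $\lfloor \log_k \log n\rfloor$; observe that $x$ is forced whenever the nodes with placement below $\pi(x)$ form a cut (your ``finite rooted subtree with all leaves early'' is exactly this); analyze the cut probability on the infinite $(k-1)$-ary tree via the fixed point $f(s)=s+(1-s)f(s)^2$, so that $f(s)^2=R_3(s)=\min\{(s/(1-s))^2,1\}$; and integrate to get $2\ln 2-1$, with the finite truncation contributing only $o(1)$. The only cosmetic difference is phrasing the forcing criterion as a subtree rather than a cut, and deriving $R_3(r)$ from the child-level recursion rather than the paper's squared form $x=(r+(1-r)x)^2$, but these are equivalent.
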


Combining the lemma with (\ref{ineq-success}) shows that $\PPSZ$ on
$3$-CNF formulas with a unique satisfying assignment has a success
probability of at least $2^{-(2\ln(2)-1+o(1))n} \in
\Omega(1.308^{-n})$. For the case of multiple satisfying assignments,
the lemma does not hold anymore.\paragraphprev

\paragraphdef{Critical variables.}
Let $F$ be a satisfiable CNF formula and
$x$ a variable. Recall that we call $x$ {\em critical} if all satisfying assignments
of $F$ agree on $x$. The following observation is not difficult
to show:
\begin{observation}
  Let $F$ be a satisfiable CNF formula and let $V_C$ be the set of
  critical variables. Let $B_\alpha$ be the subcube as defined
  above. For a satisfying assignment $\alpha$, let $N_\alpha$ be the
  set of nondefining variables. 
  Then $V_C \subseteq N_\alpha$.
\end{observation}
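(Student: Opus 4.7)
I would prove the observation by contradiction. The plan is to assume a critical variable $x$ lies in $D_\alpha$ for some $\alpha \in \sat(F)$ and derive a contradiction by tracking where the ``flipped'' assignment $\alpha \oplus \{x\}$ ends up in the subcube partition. The key leverage is that subcubes partition $\{0,1\}^V$, so each assignment lies in exactly one $B_\beta$.

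Concretely: suppose towards contradiction that $x \in V_C$ but $x \in D_\alpha$ for some satisfying assignment $\alpha$. Let $c := \alpha(x)$; by criticality, every $\gamma \in \sat(F)$ satisfies $\gamma(x) = c$. Set $\alpha' := \alpha \oplus \{x\}$, so $\alpha'(x) = 1-c$. Since the subcubes $\{B_\gamma\}_{\gamma \in \sat(F)}$ partition $\{0,1\}^V$, there is a (unique) $\beta \in \sat(F)$ with $\alpha' \in B_\beta$. Note that $\beta(x) = c$ by criticality, while $\alpha'(x) = 1-c$, so $\alpha'$ and $\beta$ disagree on $x$. Since $\alpha' \in B_\beta$ means $\alpha'$ agrees with $\beta$ on every defining variable of $B_\beta$, we must have $x \in N_\beta$.

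Now I would use that $x$ is nondefining in $B_\beta$ to flip $x$ freely inside $B_\beta$: the assignment $\alpha' \oplus \{x\} = \alpha$ also belongs to $B_\beta$. But $\alpha \in B_\alpha$ by construction of the partition, and since the $B_\gamma$ are pairwise disjoint, this forces $B_\alpha = B_\beta$. In particular $D_\alpha = D_\beta$, and so $x \in N_\beta$ gives $x \in N_\alpha$, contradicting our assumption $x \in D_\alpha$.

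There is no real obstacle here; the only subtlety is making sure each step uses the correct property of the partition: that $\alpha \in B_\alpha$ (so the partition is ``pointed''), that membership in $B_\beta$ is determined coordinatewise on $D_\beta$, and that a nondefining variable can be flipped within a subcube without leaving it. With these three facts, the contradiction argument is essentially a one-liner, which matches the paper's remark that the observation is ``not difficult to show''.
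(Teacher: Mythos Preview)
Your argument is correct. The paper does not supply a proof of this observation at all---it merely states that it is ``not difficult to show''---so there is no approach to compare against; your contradiction via the flipped assignment $\alpha \oplus \{x\}$ is a clean and natural way to extract the conclusion from the three defining properties of the subcube partition (pairwise disjointness, $\alpha \in B_\alpha$, and coordinatewise agreement on defining variables).
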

\begin{lemma}
  Let $F$ be a satisfiable $3$-CNF formula and $\alpha$ be a
  satisfying assignment. There is a measurable set $\Gamma \subseteq
  [0,1]^V$ of placements such that for $\beta = \valueBetaH$ and
  $\gamma = \valueGammaH$, we have
  \begin{enumerate}
  \item $\Pr[\pi \in \Gamma] \geq 2^{-\beta |D_\alpha|-o(n)} \approx
    \valueOneHalfToTheBetaH^{|D_\alpha|-o(n)}$,
  \item $\Pr[x \in \forced(\alpha,\pi) \ | \ \pi \in \Gamma] \geq
    \gamma - o(1) \approx \valueGammaH - o(1)$ for all $x \in
    N_\alpha$,
  \item $\Pr[x \in \forced(\alpha,\pi) \ | \ \pi \in \Gamma] \geq
    2-2\ln(2) - o(1) \approx 0.6137056$ for all critical $x\in V$.
      \label{lemma-item-3}
    \end{enumerate}
    \label{lemma-forced-critical}
\end{lemma}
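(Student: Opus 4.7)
I would follow the critical-clause-tree framework of~\cite{ppsz} with the modifications of~\cite{rolf2006}, defining $\Gamma$ so that the placements of the defining variables $D_\alpha$ are forced to lie in an initial segment $[0,t]$ of $[0,1]$, for a threshold $t$ to be chosen. Since the placements of individual variables are i.i.d.\ uniform on $[0,1]$, one gets $\Pr[\pi\in\Gamma]=t^{|D_\alpha|}$, and setting $t=2^{-\beta}$ with $\beta=\valueBetaH$ establishes part~1 immediately. Conditional on $\pi\in\Gamma$, the placements of defining variables are uniform on $[0,t]$ while those of nondefining variables remain uniform on $[0,1]$, with all placements independent.

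For part~2, I would reproduce the analysis of~\cite{rolf2006}. For each $x\in N_\alpha$, one builds a critical clause tree at $x$ relative to $\alpha$: a tree whose leaves correspond to clauses of $F$, internal edges are labelled by variables, and along any root-to-leaf path each such labelling variable is set by $\alpha$ to falsify that leaf's clause. The standard PPSZ computation shows that $x$ is forced in $G=\Resolve(F,\log n)$ whenever the tree can be ``cut'' by a set of variables that all have $\pi$-placement below $\pi(x)$. The conditioning on $\pi\in\Gamma$ guarantees that every $D_\alpha$-variable appearing in the tree is placed before $x$ (as long as $\pi(x)>t$, which happens with probability $1-t$); combining this with a uniform placement of $N_\alpha$-variables, the integral giving the forced probability works out to at least $\gamma-o(1)=\valueGammaH-o(1)$, with the threshold $t$ tuned precisely to balance parts~1 and~2 at the values stated.

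The new ingredient is part~3. If $x$ is critical, then $F^{[x\mapsto\overline{\alpha(x)}]}$ has no satisfying assignment at all, which is exactly the hypothesis under which the unique-SAT analysis of~\cite{ppsz}, i.e.\ Lemma~\ref{lemma-unique-forced}, is valid. The critical clause tree at such $x$ can be built using arbitrary variables of $V$ (not only those of $N_\alpha$), because every non-trivial extension of $\alpha|_{V\setminus\{x\}}$ to a value at $x$ flipping $\alpha(x)$ leads to an unsatisfiable formula. Under uniformly random $\pi$ this gives $\Pr[x\in\forced(\alpha,\pi)]\ge 2-2\ln(2)-o(1)$. The main obstacle will be to show that this bound is preserved under the conditioning $\pi\in\Gamma$: I expect to argue that the event ``the critical clause tree at $x$ is cut by variables placed before $x$'' is monotone non-decreasing when $D_\alpha$-placements are pushed toward $0$, so that restricting to $\Gamma$ can only help. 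This monotonicity should follow because any defining variable that helps cut the tree remains a valid cut after being moved earlier, while no cut is destroyed; carrying this out carefully, together with integrating $\pi(x)$ uniformly over $[0,1]$, yields the $2-2\ln(2)-o(1)$ bound of part~3.
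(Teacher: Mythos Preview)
Your construction of $\Gamma$ is the main gap. Forcing all defining variables into an initial segment $[0,t]$ makes the conditional distribution of each defining variable uniform on $[0,t]$, i.e.\ its distribution function is $H(r)=\min\{r/t,1\}$. This is precisely the ``new PPSZ'' choice of $H$ (the one the paper uses only for $4$-SAT), and with a single parameter $t$ you cannot simultaneously hit both constants in the lemma. Concretely, $\beta_H=\int_0^1 h(r)\log h(r)\,dr=-\log t$, so $t=2^{-\beta}\approx 0.5735$; but then $\gamma_H=\int_0^1\min\{H(r)^2,R_3(r)\}\,dr$ evaluates to roughly $0.597$, not the required $\valueGammaH$. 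The stated pair $(\beta,\gamma)$ comes from Rolf's two-piece distribution
\[
H(r)=\begin{cases} r/\theta & r<1-\theta,\\ 1-(-a\ln r)^b & r\ge 1-\theta,\end{cases}
\]
with $\theta\approx 0.5246$ and suitable $a,b$; the set $\Gamma$ is then obtained not by a threshold condition but via a separate approximation lemma (Lemma~\ref{p.l.Hlemma}) that produces a $\Gamma$ on which the uniform placement mimics $\pi_H$, at the cost $2^{-\beta_H|D_\alpha|-o(n)}$. So ``reproducing the analysis of~\cite{rolf2006}'' is incompatible with your threshold $\Gamma$: Rolf's contribution is exactly the non-threshold $H$.

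Your part~3 argument is essentially right and matches the paper: a critical variable $x$ admits a \emph{full} critical clause tree (every leaf at depth $d$), because flipping any set of labels containing $x$ falsifies $F$; and since $H(r)\ge r$ for a nice distribution function, pushing defining variables earlier can only increase the cut probability, giving $Q(T_x,\pi_H)\ge R_3-o(1)=2-2\ln 2-o(1)$. Minor point: your description of critical clause trees (``leaves correspond to clauses, internal edges labelled by variables'') does not match the actual object; nodes are labelled by variables, and cuts correspond to critical clauses in $G$.
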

The important part of the lemma is point \ref{lemma-item-3}, namely
that critical variables are forced with a larger probability than
non-critical ones.
\begin{proof}[Proof of Theorem~\ref{p.t.3sat}]
Using Theorem~\ref{theorem-critical}, we can assume $c(F)\geq 0.48659459$.
Let $\Delta := |D_\alpha|/|V|=1-|N_\alpha|/|V|$ be the fraction of defining
  variables.
  Combining (\ref{sum-of-unforced}) with Lemma~\ref{lemma-forced-critical},
  we obtain
  \begin{eqnarray*}
    \E[|N_\alpha \cap \guessed(\alpha,\pi)| \ | \ 
    \pi \in \Gamma] & = & 
    \sum_{x \in N_\alpha}\Pr[x  \in \guessed(\alpha,\pi)] \\
    & \leq &
    (2\ln 2 - 1) |V_C| + (1-\gamma) |N_\alpha \setminus V_C| + o(n) \\
    & \leq & (2\ln 2 - 1) c^* n +  (1-\gamma) (1-\Delta - c^*) n + o(n) \\
    & = & 0.389532 n - \valueOneMinusGammaH \Delta n + o(n).
  \end{eqnarray*}
  The expected fraction of nondefining variables we have to guess is
  thus a little bit larger than in the case of a unique satisfying
  assignment, where it is $\approx 0.3863$. Together with
  (\ref{ineq-success}), we conclude that the success probability
  of $\PPSZ$ is at least
  \begin{eqnarray}
  \Pr[\PPSZ(F,\beta,\pi)=\alpha \ | \ \beta \in B_\alpha] & \geq & 
  \Pr[\PPSZ(F,\beta,\pi)=\alpha \ | \ \beta\in B_\alpha, \pi \in \Gamma]
  \cdot \Pr[\pi \in \Gamma] \nonumber\\
  & \geq & 2^{-\E[|N_\alpha \cap \guessed(\alpha,\pi)| \ | \ 
    \pi \in \Gamma]}\cdot \Pr[\pi \in \Gamma] \nonumber\\
  & \geq & 2^{-0.389532 n + \valueOneMinusGammaH \Delta n} \cdot 
  \valueOneHalfToTheBetaH^{\Delta n} \cdot 2^{-o(n)} \nonumber\\
  & \geq & 1.3099684^{-n} \cdot 1.328369^{-\Delta n} \cdot 2^{-o(n)}.
  \label{eq-ppsz-prob}
  \end{eqnarray}
  Our bound on the success probability of $\PPSZ$ thus deteriorates with
  the number of defining variables. A bigger subcube $B_\alpha$ is
  better for $\PPSZ$. We combine this with the bound for Sch\"oning's
  algorithm from Iwama and Tamaki~\cite{it04}, stated 
  above in Lemma~\ref{lemma-schoening-IT}
  \begin{equation}
  \Pr[\Schoening(F,\beta) \in \sat(F) \ | \ \beta \in B_\alpha]
  \geq (2-2/k)^{-(1-\Delta)n}.
  \label{eq-schoening-prob}
  \end{equation}
  \sloppypar{
  The combined worst case is with $\Delta \approx
  0.0309273$, in which case both (\ref{eq-ppsz-prob}) and (\ref{eq-schoening-prob}) evaluate to $\Omega(1.32153^{-n})$. 
   Therefore for any $\Delta$, at least one of $\Schoening$
  and $\PPSZ$ has a success probability of $\Omega(1.32153^{-n})$.
  }
  \end{proof}
\begin{proof}[Proof of Theorem \ref{guess-istt-weak}]
Lemma 6 from~\cite{istt10} tells us that there is an algorithm $\ISTTSch$ that improves $\Schoening$ such that for all $m^*\in [0,\frac{1}{3}]$ we have, after preprocessing time $6^{m^* n}$,
\[
\Pr[\ISTTSch(F,\beta)\in\sat(F) \ | \ \beta \in B_\alpha]
\geq 1.012795^{m^* \cdot n}\cdot 1.2845745^{\Delta n}\cdot (3/4)^n.
\]
We want to prove that by replacing $\Schoening$ with $\ISTTSch$ in $\Comb$, we obtain expected running time of
  $O(1.321^{n})$. Setting $c^* := 0.48599$ and $m^*:=0.155371873$ gives $1-c^*/2\geq 1/1.321$ and $6^{m^*}\geq 1.321$. With this choice of $c^*$, we have the following bound for $\PPSZ$ (obtained as in the previous proof, but with a different constant $c^*$):
\[
\Pr[\PPSZ(F,\beta,\pi)=\alpha \ | \ \beta \in B_\alpha]
\geq 1.31^{-n} \cdot 1.3312^{-\Delta n} \cdot 2^{-o(n)}. 
\]
The combined worst case is at $\Delta\approx 0.029225$ where
$1.31^{-n} \cdot 1.3312^{-\Delta n}> 1.321^{-n}$ and
$1.012795^{m^*\cdot n}\cdot 1.2845745^{\Delta n}\cdot (3/4)^n> 1.321^{-n}$, proving that the combined success probability is $\Omega(1.321^{-n})$ (after preprocessing time $O(1.321^n)$).
\end{proof}

%
\section{Proof of Lemma~\ref{lemma-forced-critical}}
\label{section-prooflemma8}
\subsection{Critical Clause Trees}
 Let
$G:=\Resolve(F,\log(n))$. Note that $\vbl(F)=\vbl(G)$ and
$\sat(F)=\sat(G)$.
A \emph{critical clause} for $x\in V$ w.r.t.\  $\alpha$ is a clause where $\alpha$ satisfies exactly one literal and this literal is over $x$. It can be easily seen that if the output of $\PPSZ$ should be $\alpha$, then exactly the critical clauses of $G$ are the clauses that might turn into unit clauses. Note that the \emph{defining} variables are assumed to be set correctly, so we only need to consider critical clauses for \emph{nondefining} variables here.

We now define critical clause trees, a concept that tells us which critical clauses we can expect in a CNF formula after bounded resolution. Let $T$ be a rooted tree in which every node is either labeled with a variable from $V$ or is unlabeled. A \emph{cut} in a rooted tree is a set of nodes $A$ such that the root is not in $A$ and every path from the root to a leaf contains at least one node in $A$. The \emph{depth} of a node is the distance to the root.  
For a set $A$ of nodes, $\vbl(A)$ denotes the set of variables occurring as labels in $A$.
We say $T$ is a \emph{critical clause tree} for $x$ w.r.t.\ $G$ and $\alpha$ if the following properties hold:

\begin{enumerate}
\item The root is labeled by $x$.
\item On any path from the root to a leaf, no two nodes have the same label.
\item For any cut $A$ of the tree, there is a critical clause $C\in G$ w.r.t.\ $\alpha$ where the satisfied literal is over $x$ and every unsatisfied literal is over some variable in $\vbl(A)$.
\end{enumerate}

\begin{wrapfigure}{r}{4cm}
\includegraphics{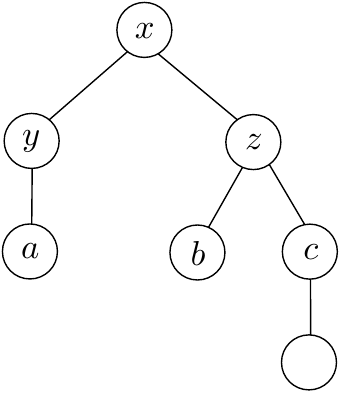}
\caption{Example Critical Clause Tree}
\label{f.treeexample}
\end{wrapfigure}

It is shown in~\cite{ppsz} that we can construct a critical clause tree for $x\in N_\alpha$ as follows: Start with the root labeled $x$. Now we can repeatedly extend a leaf node $v$. Let $L$ be the set of labels that occur on the path from $v$ to the root. If $\alpha\oplus L$ does not satisfy $F$, then we can extend the tree at that node: There is a clause $C$ in $F$ (not in $G$) not satisfied by $\alpha\oplus L$. For each literal in $C$ that is not satisfied by $\alpha$, we add a child to $v$ labeled with the variable of that literal. If there are no such literals, we add an unlabeled node. As clauses of $F$ have at most $k$ literals, each node has at most $k-1$ children. If the constructed tree has at most $\log(n)$ nodes (as we do $\log(n)$-bounded resolution), then it is a critical clause tree for $x$ w.r.t.\ $G$ and $\alpha$.

We give a simple example: Let \[F:=\{\{x,\ol{y},\ol{z}\},\{x,y,\ol{a}\},\{z,\ol{b},\ol{c}\},\{x,z,c\}\}.\] For the all-one assignment and $x$, we can get the tree shown in Figure \ref{f.treeexample} by the described procedure. $\{a,b\}$ is a cut in this tree. We have $R(\{z,\ol{b},\ol{c}\},\{x,z,c\})=\{x,z,\ol{b}\}$, $R(\{x,\ol{y},\ol{z}\},\{x,y,\ol{a}\})=\{x,\ol{z},\ol{a}\}$ and $R(\{x,z,\ol{b}\},\{x,\ol{z},\ol{a}\})=\{x,\ol{a},\ol{b}\}$, giving the required critical clause.

If $\alpha$ is the only satisfying assignment of $F$, $\alpha\oplus L$
never satisfies $F$, and we can build a tree where all leafs are at
depth $d:=\floor{\log_k(\log(n))}$. We call this a {\em full tree}. The
important observation is now that this also works if $x$ is a
\emph{critical variable}, as in that case $\alpha\oplus L$ also never
satisfies $F$, as $x\in L$.

In the general case, however, the assignment $\alpha\oplus L$ might satisfy $F$ so that we cannot extend the tree. However if $L$ consists only of \emph{nondefining} variables, then we know that $\alpha\oplus L$ does not satisfy $F$. Hence we can get a tree where every leaf not at depth $d$ is labeled by a \emph{defining} variable. We define the trees $T_x$ we will use in the analysis:
\begin{definition}
\label{p.d.trees}
For $x\in N_\alpha$, construct the critical clause tree for $x$ as
follows: If $x$ is a critical variable, then construct $T_x$ such that
all leaves are at depth $d$, i.e., construct a full tree. Otherwise,
construct $T_x$ such that all leaves not labeled by defining variables
are at depth $d$.
\end{definition}

This means that a tree might just consist of a root where all children are labeled with defining variables, which essentially nullifies the benefits from resolution. To cope with this, we have to make defining variables more likely to occur at the beginning. We achieve this by choosing the set $\Gamma$ of placements whose existence we claim in Lemma \ref{lemma-forced-critical} in a way such that exactly that happens. 

\begin{definition}
  A function $H:[0,1]\to [0,1]$ is called a \emph{nice distribution
    function} if $H$ is non-decreasing, uniformly continuous,
  $H(0)=0$, $H(1)=1$, $H$ is differentiable except for finitely many
  points and $H(r)\geq r$.
\end{definition}
Compared with~\cite{ppsz}, we added the requirement $H(r)\geq r$. This will
mean that defining variables cannot be less likely to occur at the
beginning than nondefining variables. We now define a random placement where defining
variables are placed with distribution function $H$:
\begin{definition}
  Let $H$ be a nice distribution function. By $\pi_H$, we define the
  random placement on $V$ s.t.\ $\pi(x)$ for $x\in N_\alpha$ is u.a.r.\
  $\in [0,1]$, and for $x\in D_\alpha$ and $r\in[0,1]$, $\Pr(\pi(x)\leq
  r)=H(r)$.
  \label{nice-random-placement}
\end{definition}

Assume that the variables are processed according to some placement
$\pi$. Consider $T_x$. If there is a cut $A$ such that $\pi(y)<\pi(x)$ for every $y\in\vbl(A)$, 
then $x$ is forced, as the
corresponding critical clause has turned into a unit clause for $x$.
Denote the probability that $S_x(\pi)$ is a cut in $T_x$ by
$Q(T_x,\pi)$.

For $r\in[0,1]$, let $R_k(r)$ be the smallest non-negative $x$ that
satisfies $x=(r+(1-r)x)^{k-1}$ and $R_k:=\int_0^1 R_k(r)dr$. It was
shown in~\cite{ppsz} that if $T_x$ is a full tree, then
\[Q(T_x,\pi_U)\geq R_k-o(1).\]
$R_k(r)$ can be understood as follows: Take an infinite $(k-1)$-ary tree and mark each node as ``dead'' with probability $r$, except the root. $R_k(r)$ is the probability that this tree contains an infinite path that starts at the root and contains only ``alive'' nodes.

We have $R_3=2-2\ln 2\approx0.6137$ and $R_4\approx 0.4451$. For $r\in[0,\frac{1}{2}]$, we have $R_3(r)=\left(\frac{r}{1-r}\right)^2$ and for $r\in[\frac{1}{2},1]$, we have $R_3(r)=1$. 
As $H(r)\geq r$, and by definition of $\pi_H$ and of a cut, it is obvious that
\begin{eqnarray}
  Q(T_x,\pi_H)\geq R_k-o(1), 
  \label{bound-full}
\end{eqnarray}
 if $T_x$ is a full tree.  If $T_x$
is not a full tree, we do not have any good bounds on $Q(T_x, \pi_U)$.
In~\cite{rolf2006} it is shown that if $T_x$ is not necessarily a full
tree, but a tree in which every leaf not at depth $d$ is labeled by a
defining variable, then
\begin{eqnarray}
  Q(T_x,\pi_H)\geq \gamma_H-o(1),
  \label{bound-not-full}
\end{eqnarray}
where
\[\gamma_H=\int_0^1 \min\{H(r)^{k-1},R_k(r)\}dr.\]
Obviously $\gamma_H\leq R_k$, which means that the bound
(\ref{bound-full}) for full trees is at least as strong as the bound
(\ref{bound-not-full}) for general trees. The $H(r)^{k-1}$ term
corresponds to the tree that consists of a root where all
children are labeled with defining variables and are thus leaves
(remember that there are at most $k-1$ children).  It takes a small
lemma to show that this tree and the full tree are the worst cases.
See~\cite{thesis} for details.  The following observation summarizes
this:
\begin{observation}
\label{p.o.trees}
If $x$ is a \emph{critical} variable, then $Q(T_x,\pi_H)\geq
R_k-o(1).$ If $x$ is a noncritical \emph{nondefining} variable, then
$Q(T_x,\pi_H)\geq \gamma_H-o(1).$
\end{observation}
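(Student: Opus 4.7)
The plan is to derive the observation as a direct combination of Definition~\ref{p.d.trees} with the two cut probability bounds (\ref{bound-full}) and (\ref{bound-not-full}) that have already been set up, after a brief check that Definition~\ref{p.d.trees} indeed produces legitimate critical clause trees in both cases.

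First I would verify validity of the tree construction. Recall from the discussion preceding Definition~\ref{p.d.trees} that a leaf whose root-to-leaf label set is $L$ may be extended whenever $\alpha \oplus L$ fails to satisfy $F$. For a critical $x$, property~2 of critical clause trees plus the fact that $x$ labels the root force $x \in L$ for every such $L$; flipping a critical variable destroys satisfaction, so $\alpha \oplus L \notin \sat(F)$ and the extension step never stalls, allowing the full tree down to depth $d$. For a noncritical $x \in N_\alpha$, I would instead invoke the subcube-partition property: if $L \subseteq N_\alpha$ is nonempty, then $\alpha \oplus L$ lies in $B_\alpha$, but if it were also in $\sat(F)$ it would have to lie in $B_{\alpha \oplus L}$, contradicting pairwise disjointness of the subcubes. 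Thus extension continues as long as the path uses only nondefining labels, halting only upon reaching a defining-variable label or depth $d$; this yields exactly the tree shape that Definition~\ref{p.d.trees} prescribes.

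With the two tree shapes in hand, the two parts of the observation are immediate. For critical $x$, $T_x$ is a full tree, so (\ref{bound-full}) gives $Q(T_x,\pi_H) \geq R_k - o(1)$. For noncritical $x \in N_\alpha$, $T_x$ satisfies the hypothesis of (\ref{bound-not-full}), giving $Q(T_x,\pi_H) \geq \gamma_H - o(1)$. Both bounds use the assumption $H(r) \geq r$ built into the definition of a nice distribution function, which guarantees that defining variables under $\pi_H$ are placed no later in distribution than under the uniform placement, so cuts only become more likely.

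The main technical content is thus hidden inside the two invoked inequalities---the full-tree bound $R_k$ from~\cite{ppsz} and the tree-with-defining-leaves bound $\gamma_H$ from~\cite{rolf2006}---rather than in the observation itself. The conceptual point I would emphasize is precisely the asymmetry introduced by Definition~\ref{p.d.trees}: criticality is exactly the property that prevents the tree-extension procedure from dying at a non-$d$-depth leaf, upgrading the obtainable bound from $\gamma_H$ to $R_k$, and this is what ultimately powers point~\ref{lemma-item-3} of Lemma~\ref{lemma-forced-critical}.
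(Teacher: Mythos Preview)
Your proposal is correct and matches the paper's approach exactly: the observation is explicitly presented in the paper as a summary of Definition~\ref{p.d.trees} combined with the bounds (\ref{bound-full}) and (\ref{bound-not-full}), and your write-up just unfolds this with a bit more detail (in particular, your subcube-disjointness justification for why $\alpha\oplus L\notin\sat(F)$ when $L\subseteq N_\alpha$ is the natural proof of the sentence the paper states without argument). There is nothing to add or correct.
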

We want to find a set $\Gamma$ of placements such that a placement
chosen uniformly at random from $\Gamma$ behaves more or less like
$\pi_H$.
\begin{lemma}[old version of \cite{ppsz}]
\label{p.l.Hlemma}
Let $H$ be a nice distribution function. If $|D_\alpha|\geq\sqrt{n}$,
there is a set of placements $\Gamma$ depending on $n$ with the
following properties: Let $\pi_\Gamma$ be the placement choosen
uniformly at random from $\Gamma$. Then for any tree $T$ with at most
$\log(n)$ nodes we have
\[Q(T,\pi_\Gamma)\geq Q(T,\pi_H)-o(1)\]
and
\[Pr(\pi_U\in\Gamma)\geq 2^{-\beta_H |D_\alpha|-o(n)}\]
with
\[\beta_H:=\int_0^1 h(r)\log\left(h(r)\right)dr\]
where $h(r)$ is the derivative of $H(r)$.
\end{lemma}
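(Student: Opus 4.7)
The plan is to construct $\Gamma$ explicitly via a discretization argument. Partition $[0,1]$ into $M := \lceil \log n \rceil$ equal-width bins $I_1,\ldots,I_M$, and write $c_j$ for the midpoint of $I_j$. Choose integer targets $n_j$ that round $|D_\alpha| h(c_j)/M$ so that $\sum_j n_j = |D_\alpha|$ and $|n_j - |D_\alpha| h(c_j)/M| \leq 1$. Then let $\Gamma$ be the set of placements $\pi$ in which exactly $n_j$ defining variables lie in $I_j$ for every $j$, while non-defining variables are unconstrained. Intuitively, $\Gamma$ pins down the empirical distribution of defining placements to match $h$ up to the bin resolution, while leaving everything else as uniform.

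Next I would lower-bound $\Pr[\pi_U \in \Gamma]$ by a multinomial/Stirling computation. Under $\pi_U$, the bin-counts of the defining variables follow the multinomial distribution with $M$ equally likely outcomes, so
\[
\Pr[\pi_U \in \Gamma] \;=\; \binom{|D_\alpha|}{n_1,\ldots,n_M} M^{-|D_\alpha|}.
\]
Taking logarithms and applying Stirling, this equals $2^{-|D_\alpha|\, D(p\|u) \,\pm\, O(M \log |D_\alpha|)}$, where $p_j := n_j/|D_\alpha|$, $u_j := 1/M$, and $D$ denotes KL divergence. Since $D(p\|u) = \log M + \sum_j p_j \log p_j$ is a Riemann sum that approaches $\int_0^1 h(r)\log h(r)\,dr = \beta_H$ as $M\to\infty$, and since the assumption $|D_\alpha|\geq \sqrt{n}$ absorbs both the Stirling error and the Riemann-sum error into $o(n)$ in the exponent, we obtain $\Pr[\pi_U \in \Gamma] \geq 2^{-\beta_H |D_\alpha| - o(n)}$.

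For the tree bound $Q(T,\pi_\Gamma) \geq Q(T,\pi_H) - o(1)$, I would argue that $\pi_\Gamma$ is locally indistinguishable from $\pi_H$ on any small set of variables. Marginally, a single defining variable under $\pi_\Gamma$ lands in $I_j$ with probability $n_j/|D_\alpha|$ and is conditionally uniform within $I_j$, which differs from the density $h$ in total variation by $O(1/M) = o(1)$. For any set $S$ of at most $\log n$ variables, the joint placement of $S$ under $\pi_\Gamma$ is within $O(|S| \cdot M / |D_\alpha|) = o(1)$ total variation of the joint placement under $\pi_H$, by a standard ``sampling with vs.\ without replacement'' comparison: the multinomial-count constraint only weakly correlates any $\log n$ of the $|D_\alpha|\geq\sqrt n$ defining placements. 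Since a tree $T$ with at most $\log n$ nodes involves at most $\log n$ variables, and the event ``$S_x(\pi)$ is a cut in $T$'' is determined by their placements, the two cut probabilities agree up to $o(1)$.

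The main obstacle is this last step: controlling the joint-distribution discrepancy between $\pi_\Gamma$ and $\pi_H$ on a small set of variables. Conditioning on the bin-count profile introduces mild negative correlations among defining variables, and one has to verify quantitatively that removing $\log n$ of them from the count constraint barely distorts the conditional distribution of their placements. This is a routine but delicate coupling argument, and the hypothesis $|D_\alpha|\geq\sqrt n$ is exactly what drives the resulting error terms to $o(1)$ rather than $\Theta(1)$.
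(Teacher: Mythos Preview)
The paper does not prove this lemma; immediately after stating it, the text reads ``The proof of this lemma is long and complicated'' and defers to Hertli's thesis and the old version of the PPSZ paper. So there is no in-paper argument to compare your sketch against.

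That said, your discretization / method-of-types outline is the standard route to such a statement and captures the two essential ingredients correctly: a multinomial/Stirling estimate for $\Pr[\pi_U\in\Gamma]$, and a with-versus-without-replacement coupling for the tree bound. You are also right that the hypothesis $|D_\alpha|\ge\sqrt{n}$ is precisely what drives the coupling error on a set of $\le\log n$ variables to $o(1)$. One concrete slip to flag: the requirement that the $n_j$ round $|D_\alpha|\,h(c_j)/M$ while simultaneously satisfying $\sum_j n_j=|D_\alpha|$ and $|n_j-|D_\alpha|\,h(c_j)/M|\le 1$ is in general infeasible, because $\sum_j h(c_j)/M$ is only a Riemann sum for $\int_0^1 h=1$ and may miss by order $1/M$; multiplied by $|D_\alpha|$ this defect can be of order $n/\log n$, far more than the $M=\lceil\log n\rceil$ unit adjustments available. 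The easy fix is to target the exact bin masses $n_j\approx |D_\alpha|\bigl(H(j/M)-H((j-1)/M)\bigr)$, which telescope to $|D_\alpha|$ up to $O(M)$ rounding. A second detail you glossed over: under $\pi_H$ the within-bin density is proportional to $h$, not uniform, so the marginal-closeness step also leans on the (piecewise) continuity of $h$ away from the finitely many non-differentiability points of $H$. These are exactly the technicalities that make the referenced proof ``long and complicated'', but your plan has no structural gap.
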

The proof of this lemma is long and complicated, see \refHtheory{}
in~\cite{thesis}. The case $|D_\alpha|<\sqrt{n}$ is easy to handle:
The probability that all defining variables come at the beginning is substantial, and we are essentially in the
(good) unique case.  

Below we will show how to choose a good function $H$ for the case
$k=3$ and $k=4$. To get an intuition, see Figure~\ref{f.hplot} for a
plot of $H$ for $k=3$. With this function, one obtains $\gamma_H
\approx \valueGammaH$ and $\beta_H \approx \valueBetaH$.
Together with Lemma~\ref{p.l.Hlemma} and Observation~\ref{p.o.trees},
we conclude that for a {\em critical} variable $x$
$$
 \Pr[x \in \forced(\alpha,\pi)] \geq Q(T_x,\pi_H)-o(1) \geq R_k - o(1)
 \geq 0.61371,
$$
and for a non-critical non-defining variable $x$
$$
\Pr[x \in \forced(\alpha,\pi)] \geq Q(T_x,\pi_H) \geq \gamma_H-o(1)
\geq \valueGammaH - o(1).
$$

\subsection{Choosing a good $H$}

Let now $k=3$. We choose $H$ as in~\cite{rolf2006}: Let $\theta\in [0.5,1]$ be a parameter. With some appropriate parameters $a$ and $b>1$, we define $H(r)$ as follows:
\[H(r):=
\begin{cases}
r/\theta&\mathrm{if\ }r\in [0,1-\theta)\\
1-\left(-a \ln(r)\right)^b&\mathrm{if\ }r\in[1-\theta,1]
\end{cases}
\]
\paragraphdef{3-SAT.}
\begin{wrapfigure}{r}{6cm}
\includegraphics[width=5.5cm]{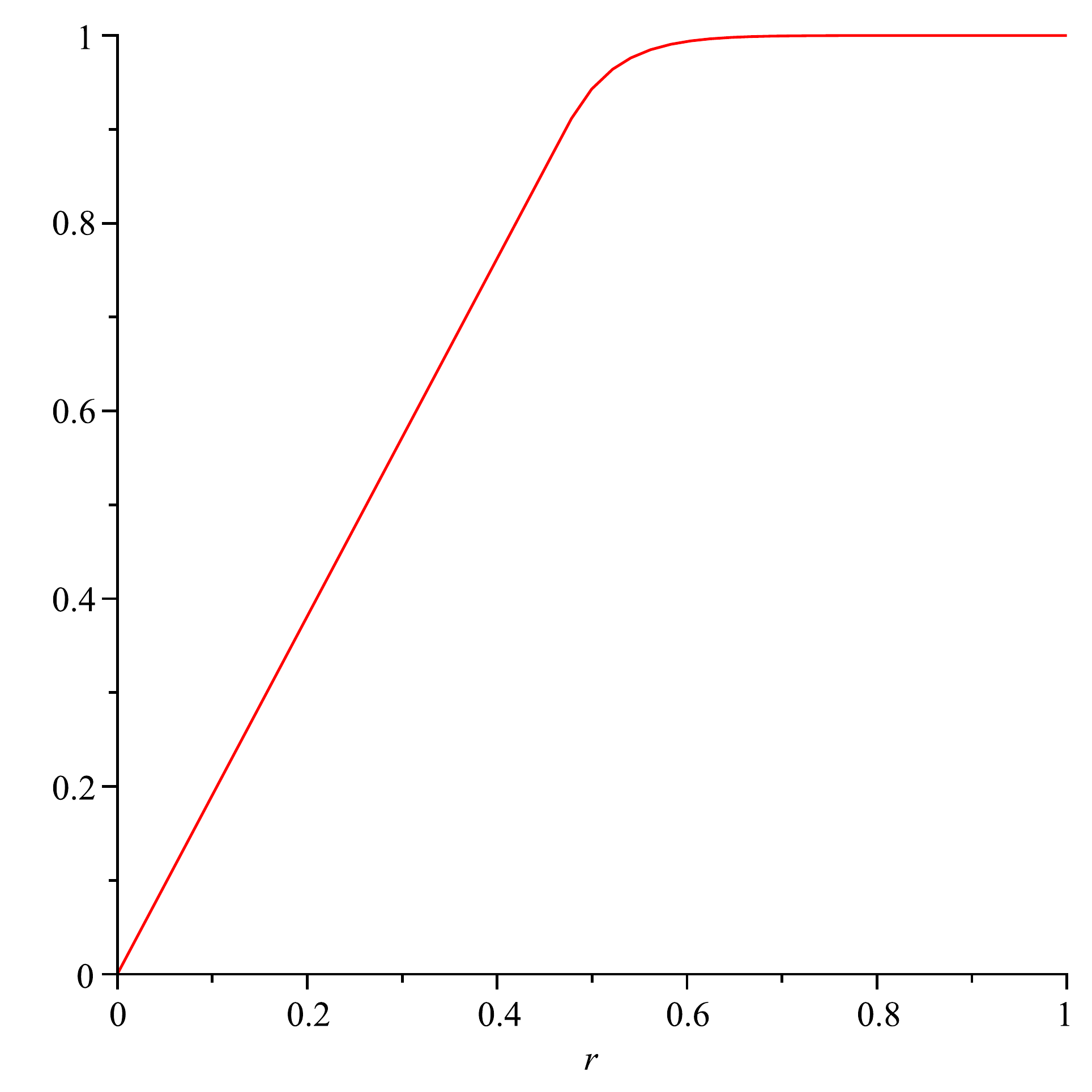}
\caption{$H(r)$ for 3-SAT}
\label{f.hplot}
\end{wrapfigure}
To determine $a$ and $b$, we set the constraints 
\[H(1-\theta)=R_3(1-\theta)^{1/2}\] 
(as $\theta\geq 1/2$, this right-hand side is equal to $\frac{1-\theta}{\theta}$) and 
\[h(1-\theta)=1/\theta.\]

If these constraints are satisfied, $H(r)$ is a nice distribution function that is differentiable on $[0,1]$. Figure \ref{f.hplot} gives a plot of the $H(r)$ we use. Numerical optimization gives $\theta\approx 0.52455825$ and as before $c^*\approx 0.48659459$. See \refHevaluation{} in~\cite{thesis} for details of the computation.
This gives
\[a\approx0.96782885577,\]
\[b\approx7.19709520894,\]
\[\beta_H\leq \valueBetaH,\]
\[\gamma_H\geq \valueGammaH.\]
This concludes the proof of Lemma~\ref{lemma-forced-critical}.\\

\paragraphdef{4-SAT.}
For 4-SAT, we use the $H$ corresponding to the new version of~\cite{ppsz}. For some parameter $\theta\in[\frac{2}{3},1]$, we let $H(r):=\min\{\frac{r}{\theta},1\}$.
It turns out that the optimum is when $\beta_H=1-\gamma_H$. In that case it is easily seen that the bound for \PPSZ{} does not depend on $|D_\alpha|$, and hence we do not need \Schoening{}.
Numerical optimization gives $\theta\approx 0.6803639$ and $c^*\approx 0.63878808$. This implies the success probability $\Omega\!\left(1.46928^{-n}\right)$, proving Theorem \ref{p.t.4sat}.

\section{Conclusion}
We have shown how to improve \PPSZ{} by a preprocessing step that
guarantees that a substantial fraction of variables will be critical. With this,
we were able to improve the bound for 3-SAT and 4-SAT
from~\cite{rolf2006}. We have also shown that our approach nicely
combines with the improvement by~\cite{istt10} by giving an even
better bound. In 4-SAT, we are
already very close to the unique case. We do not know if a more
refined choice of $H$ (similar to~\cite{rolf2006}), possibly depending
on $\Delta$, allows us to close that gap.

It is interesting to see that we could make use of multiple assignments in the guessing step before considering just one assignment using the subcube partition.
\section*{Acknowledgments}
We thank Emo Welzl for many fruitful discussions and continous support and Konstantin Kutzkov for pointing us to~\cite{istt10}.

\appendix
\section{Proof of the $O(1.32065^n)$ bound}
In this section we prove that there exists an algorithm that for every satisfiable $3$-CNF formula finds a satisfying assignment
in expected running time
$O\!\left(1.32065^n\right)$, as stated in Theorem \ref{guess-istt}.

First we show how to derive from~\cite{istt10} a statement similar to Lemma 6 of~\cite{istt10}. They have used such a lemma, but did not state it explicitly. Then analogously to before, we give the parameters $\theta$, $c^*$ and $m^*$ (derived by numerical optimization) to prove the claimed bound.
\begin{lemma}[\cite{istt10}]
Let $f_m:=\frac{64}{63}$ and $f_d:=1.28248358$. Let $\Delta:=|D_\alpha|/|V|$, as before.
For $m^*\in [0,\frac{1}{3}]$ we have after preprocessing time $\mathcal{O}(6^{m^* n})$ that
\[
\Pr[\ISTTSch(F,\beta)\in\sat(F) \ | \ \beta \in B_\alpha]
\geq \left(f_m\right)^{m^* \cdot n}\cdot (f_d)^{\Delta n}\cdot (3/4)^n.
\]
\end{lemma}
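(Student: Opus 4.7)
The plan is to carefully retrace the argument of~\cite{istt10} and extract the dependence of the success probability on three independent parameters: the enumeration radius $m^*n$, the subcube dimension $|N_\alpha|$, and the number of defining variables $|D_\alpha| = \Delta n$. Recall that $\ISTTSch$ first enumerates a covering code of Hamming radius $m^*n$ around some base point (at cost $\mathcal{O}(6^{m^*n})$) and then runs, from each enumerated starting point, a modified Schöning-style random walk in the spirit of Hofmeister--Schöning--Schuler--Torán and Baumer--Schuler~\cite{hofmeister02,bs03}.

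First I would revisit the internal per-start-point analysis in~\cite{istt10}: for the modified walk starting at an assignment $\gamma$ at Hamming distance $d$ from the target $\alpha$, they prove a success-probability lower bound that decomposes multiplicatively over the variables, with different factors for positions inside versus outside the enumerated ball. Next I would average this bound over $\beta$ chosen uniformly at random in $B_\alpha$: since $\beta$ agrees with $\alpha$ on all $|D_\alpha|$ defining positions, those positions contribute a more favourable per-variable factor than the non-defining ones, which are uniform in $\{0,1\}$ and contribute the familiar Iwama--Tamaki factor $3/4$ from Lemma~\ref{lemma-schoening-IT}.

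The third step is algebraic: combine the pieces. Writing the walk contribution in the form $c^{|D_\alpha|} \cdot (3/4)^{|N_\alpha|}$ for a suitable constant $c \in (3/4, 1)$ coming from~\cite{istt10}'s transition analysis, and the enumeration contribution as $(f_m)^{m^*n}$, one rearranges to $(f_m)^{m^*n} \cdot (3/4)^n \cdot (4c/3)^{\Delta n}$ and identifies $f_d := 4c/3$. The specific numerical values $f_m = 64/63$ and $f_d = 1.28248358$ should fall out by substituting $k=3$ and the optimal transition probabilities of~\cite{istt10} into this expression. The constraint $m^* \in [0, 1/3]$ should emerge naturally from the requirement that the enumeration radius stay within the operational regime of their walk.

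The main obstacle will be the bookkeeping. Lemma~6 of~\cite{istt10} is stated under a specific internal parameter choice rather than as a clean function of $m^*$ and $\Delta$, which is exactly why the authors of the present paper say they need to derive and prove the parameters explicitly. Retracing the argument while keeping the three contributions (enumeration, defining variables, non-defining variables) cleanly separated — and in particular verifying that the subcube conditioning does not implicitly double-count the defining-variable bonus against the enumeration bonus — is the delicate step, and is where the constants $64/63$ and $1.28248358$ must be read off without error.
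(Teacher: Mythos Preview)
Your proposal rests on a misidentification of what $\ISTTSch$ does, and the argument you sketch cannot recover the stated constants.

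The preprocessing is \emph{not} a covering-code enumeration of Hamming radius $m^*n$. It is the Baumer--Schuler procedure~\cite{bs03}: in time $\mathcal{O}(6^{m^*n})$ it either finds a satisfying assignment or outputs a set $\mathcal{C}$ of at least $m^*n$ pairwise variable-disjoint $3$-clauses of $F$. The parameter $m^*n$ counts independent clauses, not a radius. The gain over plain Sch\"oning then comes from modifying the initial assignment $\beta$ on $\vbl(\mathcal{C})$ via the Hofmeister-style non-uniform distribution and running a \emph{single} walk from the modified $\beta'$, not from launching walks at many enumerated points.

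Because of this, the analysis is per independent clause, not per variable. Each $C\in\mathcal{C}$ contributes a factor $e(j)$ depending on its \emph{type} $j$: how many of its three literals lie over defining versus non-defining variables, and among the defining ones how many are satisfied by $\alpha$. There are nine types; for instance type $0$ (all three non-defining) has $e(0)=3/7$, and $f_m = e(0)/(3/4)^3 = (3/7)\cdot(64/27) = 64/63$ is exactly this ratio. The constant $f_d$ is \emph{not} of the form $4c/3$ for a single walk constant $c$: one must enumerate the sixteen \emph{patterns} describing how the subcube partition distributes types over the eight assignments on $\vbl(C)$, and for each pattern $i$ verify
\[
\prod_j e(j)^{p(i,j)} \;\geq\; \Bigl(\tfrac{3}{4}\Bigr)^{3} f_m \prod_j f_d^{\,p(i,j)d(j)}.
\]
The value $f_d=1.28248358$ is the minimum over patterns (attained at pattern $4$). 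Your plan of a clean multiplicative split into ``defining'' and ``non-defining'' variable factors cannot reproduce this, because the defining-variable bonus interacts with the independent-clause structure through this finite case analysis rather than variable by variable.
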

Note that $f_m=\frac{64}{63}\approx1.015873>1.012795$, which is corresponding number in Lemma 6 of~\cite{istt10}; however $f_d$ decreases from $1.2845745$ to $1.28248358$. This means that we are better if $\Delta$ is small, but worse if $\Delta$ is large. However, as the combined worst case is for small $\Delta$ ($\approx 0.0286138$) , we improve the probability of the combined algorithm nonetheless.
\begin{proof}
We can interpret $\ISTTSch$ as follows: We first do a preprocessing step using an algorithm from Baumer and Schuler~\cite{bs03} that takes time $\mathcal{O}(6^{m^* n})$. This either finds a satisfying assignment of $F$ with high probability or it finds a set of independent 3-clauses $\mathcal{C}$ (clauses that do not share variables) of size at least $m^*\cdot n$. In the latter case, this set of independent clauses is stored and $\ISTTSch$ does the following: The initial assignment $\beta$ is modified on the variables of $\mathcal{C}$ to an assignment $\beta'$. Then $\Schoening(F,\beta')$ is called.

In~\cite{istt10} it was shown that we can look at each clause in $C\in\mathcal{C}$ independently in terms of the probability of $\Schoening(F,\beta')$. For a satisfying assignment $\alpha$, we determine the \emph{type} of a clause $C$ by the number of literals that correspond to non-defining variables, defining variables as satisfied literals, and defining variables as dissatisfied literals. There are 9 types, which are denoted by $0,10,11,20,21,22,31,32,33$. The first digit denotes the number of defining variables of the literals of $C$, the second digit denotes the number of defining variables corresponding to satisfied literals. The corresponding probability of $\Schoening$ is listed in Table~\ref{t.istt-sch-prob}, as in Table 3 of~\cite{istt10}.

\begin{table}
\begin{tabular}{cc}
type $j$ of $C$&$e(j)$, the $\Schoening$ success probability on the variables of $C$\\
$0$&$\frac{3}{7}$\\
$10$&$\frac{379}{672}$\\
$11$&$\frac{181}{336}$\\
$20$&$\frac{3}{4}$\\
$21$&$\frac{29}{42}$\\
$22$&$\frac{29}{42}$\\
$31$&$1$\\
$32$&$\frac{37}{42}$\\
$33$&$1$\\
\end{tabular}
\caption{Clause type and $\Schoening$ success probability}
\label{t.istt-sch-prob}
\end{table}

Iwama et al.\ have then shown that there are $16$ patterns how the subcube partition (dependent on the independent 3-clauses $\mathcal{C}$) of the assignments on the variables of a clause can result in these types, as shown in Table~\ref{t.istt-pattern}. Note that patterns $9,10$ and patterns $13,14$ have the same type outcomes, but are noted as different patterns in~\cite{istt10}. Pattern $0$ corresponds to type $0$ and it was not treated explicitly as a pattern in~\cite{istt10}.
Furthermore it was shown that with high probability the number of resulting types is close to the expectation.
Let $p(i,j)$ denote the probability that pattern $i$ turns into type $j$. For type $j$, let $d(j)$ denote the number of defining variables (i.e. the first digit). 
Then we have to show the following bound for every pattern $i$:
\[\prod_{j} e(j)^{p(i,j)}\geq \left(\frac{3}{4}\right)^3 f_m \prod_j \left(f_d\right)^{p(i,j)d(j)}.\]
The left-hand side corresponds to the expected $\Schoening$ probability of a clause of pattern $i$; the right-hand side corresponds to the term we want in the statement of the lemma. See~\cite{istt10} for details.
As $f_m$ is a rational number that is easily derived from pattern $0$ and hence type $0$, we can check the following inequality for patterns $1$ to $15$:
\[f_d(j):=\left(\left(\frac{3}{4}\right)^{-3}\frac{1}{f_m}\prod_{j} e(j)^{p(i,j)}\right)^{\frac{1}{\sum_{j}p(i,j)d(j)}}\geq f_d.\]
We have listed the numerical results of $f_d(j)$ in Table \ref{t.istt-fdj} (9 significant digits, rounded down). The worst case for $f_d(j)$ is pattern $4$, which corresponds to $f_d$ of the lemma statement.

\begin{table}
\begin{tabular}{cl}
pattern number&probability distribution of types\\

$0$ &$1:0$\\

$1$ &$\frac{1}{2}:10,\frac{1}{2}:11$\\

$2$ &$\frac{2}{4}:11,\frac{1}{4}:20,\frac{1}{4}:21$\\

$3$ &$\frac{2}{4}:10,\frac{1}{4}:21,\frac{1}{4}:22$\\

$4$ &$\frac{1}{4}:20,\frac{2}{4}:21,\frac{1}{4}:22$\\

$5$ &$\frac{4}{8}:11,\frac{2}{8}:20,\frac{1}{8}:31,\frac{1}{8}:32$\\

$6$ &$\frac{4}{8}:10,\frac{2}{8}:22,\frac{1}{8}:31,\frac{1}{8}:32$\\

$7$ &$\frac{4}{8}:10,\frac{2}{8}:21,\frac{1}{8}:32,\frac{1}{8}:33$\\

$8$ &$\frac{4}{8}:10,\frac{1}{8}:31,\frac{2}{8}:32,\frac{1}{8}:33$\\

$9$ &$\frac{2}{8}:20,\frac{2}{8}:21,\frac{2}{8}:22,\frac{1}{8}:31,\frac{1}{8}:32$\\

$10$&$\frac{2}{8}:20,\frac{2}{8}:21,\frac{2}{8}:22,\frac{1}{8}:31,\frac{1}{8}:32$\\

$11$&$\frac{2}{8}:20,\frac{4}{8}:21,\frac{1}{8}:32,\frac{1}{8}:33$\\

$12$&$\frac{2}{8}:20,\frac{2}{8}:22,\frac{2}{8}:31,\frac{2}{8}:32$\\

$13$&$\frac{2}{8}:20,\frac{2}{8}:21,\frac{1}{8}:31,\frac{2}{8}:32,\frac{1}{8}:33$\\

$14$&$\frac{2}{8}:20,\frac{2}{8}:21,\frac{1}{8}:31,\frac{2}{8}:32,\frac{1}{8}:33$\\

$15$&$\frac{2}{8}:20,\frac{2}{8}:31,\frac{3}{8}:32,\frac{1}{8}:33$\\
\end{tabular}
\caption{Probability distribution of types}
\label{t.istt-pattern}
\end{table}

\begin{table}
\begin{tabular}{cc}
pattern $j$&$f_d(j)$\\
$1$&$1.28611973$\\
$2$&$1.28272221$\\
$3$&$1.28466750$\\
$4$&$1.28248358$\\
$5$&$1.29339711$\\
$6$&$1.29507819$\\
$7$&$1.29507819$\\
$8$&$1.30294154$\\
$9$&$1.29080377$\\
$10$&$1.29080377$\\
$11$&$1.29080377$\\
$12$&$1.29749876$\\
$13$&$1.29749876$\\
$14$&$1.29749876$\\
$15$&$1.30300231$\\
\end{tabular}
\caption{$f_d(j)$ for pattern $j$}
\label{t.istt-fdj}
\end{table}
\end{proof}

Starting from the previous lemma, we now prove Theorem \ref{guess-istt}. We let $\theta:=0.5224565$, $c^*:=2-\frac{2}{1.32065}\approx 0.4855942149$, $m^*:=\log_6(1.32065)\approx 0.155223982$ ($c^*$ and $m^*$ are rounded down). It is easily seen that the choice of $c^*$ and $m^*$ work for the bound we want to achieve. Note that if we would want to have more significant digits in the bound, we would need to lower $c^*$ and $m^*$ slightly.  As before, using the $H$ from~\cite{rolf2006}, we have now
\[a \approx 0.99012456677,\]
\[b \approx 7.85858019246,\]
\[\beta_H \leq 0.8180299645,\]
\[\gamma_H\geq 0.6083696059.\]

We now obtain a lemma analogous to Lemma \ref{lemma-forced-critical} but with different $\beta$ and $\gamma$. It is straightforward to show analogously to before that we get the combined bound of $\Omega(1.32065^{-n})$ for one combined execution by considering the combined worst-case $\Delta\approx 0.0286138$.

\bibliography{thesis}

\end{document}